\def\|#1|{\mathid{#1}}
\newcommand{\mathid}[1]{\ensuremath{\mathit{#1}}}
\def\codesize{\smaller}
\def\<#1>{\codeid{#1}}
\newcommand{\codeid}[1]{\ifmmode{\mbox{\codesize\ttfamily{#1}}}\else{\codesize\ttfamily #1}\fi}
\newcommand{\ttlcb}{\texttt{\char "7B}}
\newcommand{\ttrcb}{\texttt{\char "7D}}
\begin{document}

\title{Semantics for Locking Specifications}
\titlerunning{Semantics for Locking Specifications} 

\author{Michael D. Ernst\inst{1} \and Damiano Macedonio\inst{2} \and Massimo Merro\inst{3} \and Fausto Spoto\inst{2,3}}
\institute{Computer Science \& Engineering, University of Washington, WA, USA \and Julia Srl, Verona, Italy \and Dipartimento di Informatica, Universit\`a degli Studi di Verona, Italy}

\authorrunning{Michael  Ernst et al.} 

\maketitle

\begin{abstract}
To prevent concurrency errors, programmers need to obey a locking 
discipline. Annotations that specify that discipline, such as Java's
\<@GuardedBy>, are already widely used.  Unfortunately, their
semantics is expressed informally and is consequently ambiguous.
This article highlights such ambiguities and
overcomes them by formalizing two possible semantics of \<@GuardedBy>,
using a reference operational semantics for a
core calculus of a concurrent Java-like language.
It also identifies when such annotations are actual guarantees against
data races. Our work aids in understanding the annotations and supports the
development of sound tools that verify or infer them.
\looseness=-1
\end{abstract}

\section{Introduction}\label{sec:introduction}

Concurrency can increase program performance by
scheduling parallel independent tasks on multicore hardware, and
can enable responsive user interfaces.
However, concurrency might induce problems such as \emph{data races}, \textit{i.e.},
concurrent access to shared data by different threads,
with consequent unpredictable or erroneous software behavior.
Such errors are difficult to understand, diagnose, and reproduce.
They are also difficult to prevent: testing tends to be incomplete
due to nondeterministic scheduling choices made by the
runtime, and model-checking scales poorly to real-world code.

The simplest approach to prevent data races is to follow a
\emph{locking discipline} while accessing shared data: always hold a given lock when accessing  
a datum.  It is easy to violate the locking discipline, so
tools that verify adherence to the
discipline are desirable. These tools require a \emph{specification
language} to express the intended locking discipline. 
The focus of this paper is on the formal definition of a specification 
language, its semantics, and the guarantees that it gives against data races.

In Java, the most popular  
specification language 
for expressing a locking discipline is the
\<@GuardedBy> annotation.  
Informally, if the programmer annotates a field or variable $f$ as $\guardedby{E}$ 
then a thread may access $f$ only while holding the monitor corresponding to 
the \emph{guard expression} 
$E$. 
The \<@GuardedBy> annotation was
proposed by Goetz~\cite{GoetzPBB06} as a documentation
convention only, without tool support.
The annotation has been
adopted by practitioners; 
GitHub contains about
35,000 uses of the annotation in 7,000 files.
Tool support now exists in
Java PathFinder~\cite{JPF}, the Checker Framework~\cite{DietlDEMS11},
IntelliJ~\cite{Pech10}, and Julia~\cite{Julia}.  
\looseness=-1

All of these tools  rely on the previous informal definition of $\guardedby{E}$~\cite{JavadocGuardedBy}. 
However, such an informal description is prone to 
many ambiguities. 
Suppose a field $f$ is annotated as $\guardedby{E}$, for some guard 
expression 
$E$.
\label{four-ambiguities}
(1) The definition above does not clarify how an occurrence of 
the special variable \<this>
\footnote{Normally, \<this> denotes the Java reference to the current object.}
 in
$E$ should be interpreted in client code; 
this actually depends on the context
in which  $f$ is accessed.
(2) It does not define what an \emph{access} is.
(3) It does not say whether the lock statement must use
the guard expression $E$ as written 
or whether
a different expression that evaluates to the same value is permitted.
(4)  It does not indicate whether the lock that must be taken is $E$
at the time of synchronization or $E$ at the
time of field access: side effects on $E$ might make a difference here.
(5) It does not clarify whether the lock on the guard $E$ must be taken
when accessing the field \emph{named} $f$ or the \emph{value} bound to  $f$.

The latter ambiguity is particularly important. 
The interpretation of \<@GuardedBy> based on names is 
adopted in most tools
appearing in the literature~\cite{JPF,Pech10,Julia}, whereas the interpretation 
based on values seems to be less common~\cite{DietlDEMS11,Julia}. 
As a consequence, 
it is interesting to understand whether and how these two possible
interpretations of \<@GuardedBy> 
actually protect against data races on
the annotated field/variable.

The main contribution of this article is the formalization of 
two different semantics for annotations of the form  \<@GuardedBy(\|E|) \emph{Type} x>: a
\emph{name-protection} semantics, in which accesses to the annotated
\emph{name} \<x> need to be synchronized on the guard expression $E$, and a 
\emph{value-protection} semantics, in which accesses to 
a \emph{value} referenced by \<x> need to be synchronized on $E$.
The semantics clarify all the above ambiguities, so that programmers and tools
know what those annotations mean and which guarantees they entail. 
We then show that both the name-protection and the value-protection 
semantics can protect against data races under proper restrictions on 
the variables occurring in the guard expression $E$.
The name-protection semantics requires a further constraint ---
the protected variable or field must not be aliased.
Our formalization relies on a reference semantics for a concurrent fragment of
Java, which we provide in the
\emph{structural operational semantics} style of Plotkin~\cite{Plo81}.

\begin{figure}[t]
{\scriptsize\begin{lstlisting}[firstnumber=47,xleftmargin=8ex]
public final class ExecutionList {
\end{lstlisting}\vspace{-1.5em}
\begin{lstlisting}[firstnumber=66,xleftmargin=8ex]
  private @GuardedBy(this) RunnableExecutorPair runnables;
\end{lstlisting}\vspace{-1.5em}
\begin{lstlisting}[firstnumber=116,xleftmargin=8ex]
  public void execute() {
    RunnableExecutorPair list;
    synchronized (this) {
\end{lstlisting}\vspace{-1.5em}
\begin{lstlisting}[firstnumber=123,xleftmargin=8ex]
      list = runnables;
      runnables = null;
    }
\end{lstlisting}\vspace{-1.5em}
\begin{lstlisting}[firstnumber=137,xleftmargin=8ex]
    RunnableExecutorPair reversedList = null;
    while (list != null) {
      RunnableExecutorPair tmp = list;
      list = list.next;
      tmp.next = reversedList;
      reversedList = tmp;
    }
\end{lstlisting}\vspace{-1.5em}
\begin{lstlisting}[firstnumber=148,xleftmargin=8ex]
  }
\end{lstlisting}\vspace{-1.5em}
\begin{lstlisting}[firstnumber=177,xleftmargin=8ex]
}
\end{lstlisting}}
\vspace*{-2mm}
\caption{Guava 18's \<com.google.common.util.concurrent.ExecutionList> class.
The \<@GuardedBy> annotation (line 66) is satisfied for name, but
not for value, protection.
\looseness=-1
}
\label{fig:example_from_Guava}
\end{figure}
We have used our formalization to extend  the Julia static analyzer~\cite{Julia} to
check and infer \<@GuardedBy> annotations in arbitrary Java code.
Julia allows the user to select either name-protection or value-protection. 
Our implementation reveals that most programmer-written \<@GuardedBy> annotations do not satisfy either
of the two alternative semantics given in this paper. 
For instance, in the code
of Google Guava~\cite{Guava} (release $18$),
the programmer put $64$ annotations on fields; $17$ satisfy the semantics of
name protection;  $9$ satisfy the semantics of value protection; the others
do no satisfy any of the two.
Fig.~\ref{fig:example_from_Guava} shows an example of an annotation written
by the programmers of Guava and that statisfy only the name protection.
Namely, field \<runnables> is annotated as \<@GuardedBy(this)>
but its value is accessed without synchronization at line \<140>~\cite{Guava}.
In this extended abstract proofs are omitted; they can be found in the appendix.

\paragraph*{Outline.} Sec.~\ref{sec:examples}
discusses the informal semantics of \<@GuardedBy> by way of examples.
Sec.~\ref{sec:syntax} defines the syntax and semantics of
a concurrent fragment of Java.
Sec.~\ref{sec:guardedBy} gives formal definitions for both the
name-protection and value-protection semantics. Sec.~\ref{subsec:protection} shows 
which guarantees they provide against data races.
Sec.~\ref{sec:julia} describes the implementation in Julia.
Sec.~\ref{sec:conclusion} discusses related work and concludes.


\section{Informal Semantics of \<@GuardedBy>}
\label{sec:examples}

This section illustrates the use of \<@GuardedBy> by example. 
Fig.~\ref{fig:running_example} defines an observable object that allows clients to
concurrently register listeners. Registration must be synchronized
to avoid data races:  simultaneous modifications of the \<ArrayList>
might result in a corrupted list or lost registrations.
Synchronization is needed in the \<getListeners()> method as well,
or otherwise the Java memory model does not guarantee the inter-thread visibility 
of the registrations.

The interpretation of the \<@GuardedBy(this)> annotation on field
\<listeners> requires resolving the ambiguities explained in Sec.~\ref{sec:introduction}.
The intended locking discipline is that every use of \<listeners> should be enclosed within
a construct \<synchronized (\|container|) \ttlcb...\ttrcb>, where \|container|
denotes the object whose field \<listeners> is accessed (ambiguities (1) and (2)).
For instance, the access \<original.listeners> in the copy constructor
is enclosed within \<synchronized (original) \ttlcb...\ttrcb>.
This contextualization \emph{of the guard expression\/}, similar to viewpoint
adaptation~\cite{DietlDM2007},
is not clarified in any informal definitions of \<@GuardedBy> (ambiguity (3)).
Furthermore, 
it is not clear if a definite alias of \<original> can be used as synchronization guard
at line $5$. It is not clear if \<original> would be allowed to be reassigned between lines
$5$ and $6$ (ambiguity (4)).
Note that the copy constructor does not synchronize on \<this> even though
it accesses \<this.listeners>.  This is safe so long as the constructor
does not leak \<this>.  This paper assumes that an escape
analysis~\cite{Blanchet03} has established that constructors do not leak \<this>.
The \<@GuardedBy(this)> annotation on field \<listeners> suffers also 
from ambiguity (5): it is not obvious whether it intends to protect
the \emph{name} \<listeners> (\emph{i.e.}, the name can be only used when the
lock is held) or the value currently bound to \<listeners> (\emph{i.e.}, that
value can be only accessed when the lock is held).
Another way of stating this is that \<@GuardedBy> can be interpreted as
a \emph{declaration annotation} (a restriction on uses of a name) or as
a \emph{type annotation} (a restriction on values associated to that name). 
\looseness=-1

\begin{figure}[t]
\vspace*{-1mm}
{\scriptsize\begin{lstlisting}[xleftmargin=8ex]
public class Observable {
  private @GuardedBy(this) List<Listener> listeners = new ArrayList<>();
  public Observable() {}
  public Observable(Observable original) {  // copy constructor
    synchronized (original) {
      listeners.addAll(original.listeners);
    }
  }
  public void register(Listener listener) {
    synchronized (this) {
      listeners.add(listener);
    }
  }
  public List<Listener> getListeners() {
    synchronized (this) {
      return listeners;
    }
  }
}
\end{lstlisting}}
\vspace*{-2mm}
\caption{This code has a potential data race due to aliasing of the
  \<listeners> field.}
\label{fig:running_example}
\end{figure}

The code in Fig.~\ref{fig:running_example} seems to satisfy
the name-protection locking discipline expressed by the annotation \<@GuardedBy(this)> for
field \<listeners>: every use of
\<listeners> occurs in a program point where the current thread locks its
container, and we conclude that \<@GuardedBy(this)> name-protects \<listeners>.
Nevertheless, a data race is possible, since 
two threads could call \<getListeners()> and later access the returned
value concurrently. 
This is inevitable when critical sections \emph{leak} guarded data. More generally,
name protection does not prevent data races
if there are aliases of the guarded name (such as a returned value in our example) that
can be used in an unprotected manner.
The value-protection semantics of \<@GuardedBy> is not affected by aliasing
as it tracks accesses to
the value referenced by
the name, not the name itself.

Any formal definition of \<@GuardedBy>
must result in mutual exclusion in order to ban data races.
If $x$ is \<@GuardedBy($E$)>, then at every
program point \textit{P} where a thread accesses $x$
(or its value),  the thread must hold the lock on $E$.
Mutual exclusion
%
%
requires that two conditions are satisfied:
(i) $E$ can be evaluated at all program points \textit{P},
and
(ii) these evaluations always yield the same value.

Point (i) is syntactic and related to the fact that $E$ cannot refer to
variables or fields that are not always in scope or visible at all program points \textit{P}.
This problem exists for both name protection and value protection, but is more
significant for the latter, that is meant to protect values that flow in the
program through arbitrary aliasing.
For instance, the annotation \<@GuardedBy(listeners)> cannot be used
for value protection in Fig.~\ref{fig:running_example},
since the name \<listeners> is not visible outside class \<Observable>, but its
value flows outside that class through method \<getListeners()> and must be protected
also if it accessed there.
The value protection semantics supports
a special variable \<itself> in $E$,
that refers to the current value of $x$ being protected,
without problems of scope or visibility.
For instance, for value protection, the code in Fig.~\ref{fig:running_example}
could be rewritten as in Fig.~\ref{fig:running_example_by_value}.

\begin{figure}[t]
\vspace*{-1mm}
{\scriptsize\begin{lstlisting}[xleftmargin=8ex]
public class Observable {
  private @GuardedBy(itself) List<Listener> listeners = new ArrayList<>();
  public Observable() {}
  public Observable(Observable original) {     // copy constructor
    synchronized (original.listeners) {
      listeners.addAll(original.listeners);
    }
  }
  public void register(Listener listener) {
    synchronized (listeners) {
      listeners.add(listener);
    }
  }
  public  List<Listener> getListeners() {
    synchronized (listeners) {
      return listeners;
    }
  }
}
\end{lstlisting}}
\vspace*{-2mm}
\caption{Value protection prevents data races;
see \<itself> in the guard expression.}
\label{fig:running_example_by_value}
\end{figure}

Point (ii) is semantical and related to the intent of
providing a guarantee of mutual exclusion.
For instance,
in Fig.~\ref{fig:running_example_by_value}, value protection bans data races on
\<listeners> since the guard \<itself> can be evaluated everywhere (point (i)) and always yields
the value of 
\<listeners> itself (point (ii)).
Here, the \<@GuardedBy(itself)> annotation
requires all accesses to the value of \<listeners> to occur
only when the current thread locks the same monitor --- even outside class
\<Observable>, in a client that operates on the value returned by \<getListeners()>.
In Fig.~\ref{fig:changing_lock}, instead, field
\<listeners> is \<@GuardedBy(guard)> according to both
name protection and value protection, but the value of \<guard>
is distinct at different program points: no mutual exclusion
guarantee exists and data races on \<listeners> occur.
\begin{figure}[t]
\vspace*{-1mm}{\scriptsize\begin{lstlisting}[xleftmargin=8ex]
public class Observable {
  private @GuardedBy(guard) List<Listener> listeners = new ArrayList<>();
  private Object guard1 = new Object();
  private Object guard2 = new Object();
  public Observable() {}
  public Observable(Observable original) {     // copy constructor
    Object guard = guard1;
    synchronized (guard) {
      listeners.addAll(original.listeners);
    }
  }
  public void register(Listener listener) {
    Object guard = guard2;
    synchronized (guard) {
      listeners.add(listener);
    }
  }
}
\end{lstlisting}}
\vspace*{-2mm}
\caption{If the guard expression refers to distinct values at distinct
  program points, concurrent accesses to \<listeners> can race.
}
  \label{fig:changing_lock}
\end{figure}

\section{A Core Calculus for Concurrent Java}\label{sec:syntax}
Some preliminary notions are needed to define our calculus. 
A \emph{partial function} $f$ from $A$ to $B$ is denoted by $f : A \rightharpoonup B$, and
its \emph{domain} is $\dom{f}$.
We write $f(v)\!\downarrow$ if $v\in\dom{f}$ and $f(v)\!\uparrow$ otherwise.
The symbol $\phi$ denotes the empty function, such that $\dom{\phi} = \emptyset$;
$\mapdef{v_1\mapsto t_1, \ldots, v_n\mapsto t_n}$ denotes the function $f$ with
$\dom{f} = \{v_1,\ldots,v_n\}$ and $f(v_i) = t_i$ for $i = 1,\ldots, n$;
$f[v_1\mapsto t_1, \ldots, v_n\mapsto t_n]$ denotes the update of $f$, 
where $\dom{f}$ is enlarged for every $i$ such that $v_i\notin\dom{f}$.
A tuple is denoted as $\langle v_0,\ldots, v_n\rangle$. 
A \emph{poset} is a structure $\langle A,\leq \rangle$ where $A$ is a set with a reflexive,
transitive, and antisymmetric relation $\leq$. 
Given $a\in A$, we define
$\uparrow a \eqdef \{a' : a\leq a'\}$. A \emph{chain} is a totally ordered poset.
\subsection{Syntax}\label{subsec:syntax}
Symbols $f,g,x,y, \ldots$ range over a set of variables
$\mathit{Var}$ that includes $\this$.
Variables identify either local variables
in methods or instance variables (\emph{fields}) of objects. 
Symbols $m,p, \ldots$ range over a set $\mathit{MethodName}$ of method names. 
There is a set $\locations$ of memory locations, ranged over by $l$.
Symbols $\kappa, \kappa_0, \kappa_1,\ldots $ range over a set of \emph{classes} (or 
\emph{types}) $\classes$, ordered by a \emph{subclass relation} $\leq$;
$\anglepair{\classes}{\leq}$ is a poset such that for all 
$\kappa\in\classes$ the set ${\uparrow\!\kappa}$ is a finite chain.
Intuitively, $\kappa_1\leq \kappa_2$ means that $\kappa_1$ is a
\emph{subclass} (or \emph{subtype}) of $\kappa_2$.
If $m\in\mathit{MethodName}$, then $\kappa.m$ denotes the implementation of
$m$ inside class $\kappa$, if any.
The partial function $\lookup{\,} : \classes\times\mathit{MethodName} \rightharpoonup \classes$
formalizes \emph{Java's dynamic method lookup}, \textit{i.e.} the runtime process of determining
the class containing the implementation of a method on the basis of the class of the receiver object:
$ \lookup{\kappa, m}  \eqdef   \textit{min}(\uparrow\!\kappa.m)$ if  $\uparrow\!\kappa.m\neq\emptyset$ and is undefined otherwise, 
where $\uparrow\!\kappa.m  \eqdef  \{\kappa'\in\,\uparrow\!\kappa \; \mid \;  m \textrm{ is implemented in } \kappa'\}$
%
is a finite chain since $\uparrow\!\kappa.m\subseteq\, \uparrow\!\kappa$.

The set of \emph{expressions} $\expressions$, ranged over by
$E$, and the set of \emph{commands} $\commands$, ranged over by $C$, are defined
as follows. \emph{Method bodies}, ranged over by 
$B$, are \void-terminated commands.
\[
\begin{array}{rcl}
E & {\Bdf} & x \Bor E.f \Bor \objinit{\kappa}{f_1 = E_1,\ldots,f_n = E_n} \\[2pt]
C & {\Bdf} & \dec{x = E} \Bor x \assign E \Bor x.f \assign E \Bor C;C \Bor 
\! \void \! 
 \Bor \! E.m(\,) \! \Bor   \\
 & &  {\fork{E.m(\,)}} \! \Bor \! \synch{E}{C}  \!\Bor\! \lock{l} \!\Bor\! \unlock{l} \\[2pt]
B & {\Bdf} & \mbox{\void} \Bor C;\void
\end{array}
\]
Constructs of our language are simplified versions of those of Java.
For instance, loops must be implemented through recursion. We assume that the compiler
ensures some standard syntactical properties, such as
the same  variable cannot be declared twice in a method, and
the only free variable in a method's body is \<this>.
These simplifying assumptions can be relaxed without affecting our results. 

Expressions are variables, field accesses, and a construct for object creation,
$\objinit{\kappa}{f_1=E_1,\ldots,f_n=E_n}$, that creates an object of class $\kappa$ and initializes
each field $f_i$ to the value of $E_i$.
Command $\dec{\!}$ declares a local variable.
The declaration of a local variable in the body $B$ of a method $m$ must
introduce a \emph{fresh} variable never declared before in $B$,
whose lifespan starts from there and reaches the end of $B$.
The commands for variable/field assignment, sequential composition, 
and termination are standard. 
Method call
$E.m(\,)$ looks up and runs method $m$ on the runtime value of $E$.
Command $\fork{E.m(\,)}$ does the same asynchronously, on a new thread.
Command $\synch{E}{C}$ is like Java's
\<synchronized>: the command $C$ can be executed only when 
the current thread holds the lock on the value of $E$. 
$\lock{l}$  and $\unlock{l}$ cannot be used by the programmer: 
our semantics introduces them in order to implement object synchronization.

The set of classes is 
\(\classes\eqdef\{\kappa :\mathit{MethodNames}\rightharpoonup B \mid\dom{\kappa}\textrm{ is finite}\}  . \)
The binding of fields to their defining class is not relevant in our formalization.
Given a class $\kappa$ and a method name $m$, if $\kappa(m) = B$ then $\kappa$
implements $m$ with body $B$\@.
For simplicity, $\this$ is the only free variable in $B$ and
methods have no formal parameters and/or return value.
A \emph{program} is a finite set of classes and includes
a distinguished class $\initk$ that only defines a
method $\mathit{main}$ where the program starts:
$\initk\; \mathit{\eqdef}\; \{\methodef{\mathit{main}}{B_{\mathit{main}}}\}$.
\label{sec:ex:simple}
\begin{figure}[t]
\begin{tabular}{c|cc}
{\scriptsize\qquad\begin{lstlisting}
public class K {
  private K1 x = new K1(); 
  private K2 y = new K2();
  public void m() {
    K1 z = x;
    K2 w = new Object();
    synchronized (z) {
      y = z.f;
      w = y;
    }
    w.g = new Object();
  }
}

class K1 {
  K2 f = new K2();
}

class K2 {
  Object g = new Object();
}
\end{lstlisting}\qquad} 
&  
\qquad
&
\begin{tabular}{rr|c|c|}
	  &      & by-name & by-value \\
\hline    
field & \<x> & --                         & \<@GuardedBy(itself)> \\
field & \<y> & \<@GuardedBy(this.x)> & --                    \\
variable   & \<z> & \<@GuardedBy(itself)>      & \<@GuardedBy(itself)> \\
variable   & \<w> & --                         & --  \\
\hline
\end{tabular}
\end{tabular}
\caption{Running example.}
\label{fig:semantics_differences}
\end{figure}
\begin{sloppy}
\begin{example}\label{ex:simple}
\vspace*{-2mm}
Fig.~\ref{fig:semantics_differences} gives our \emph{running example\/} in Java.
In our core language, the body of method \<m> is translated as follows:
$B_{\<m>}=\strut$\<decl z = this.x;
decl w = Object$\langle\rangle$;
sync(z) \{ this.y {:=} z.f; w {:=} this.y \};
w.g {:=} Object$\langle\rangle$;
skip>,
with classes
$\<K> \mathrel{\eqdef} \{\methodef{\<m>}{B_{\<m>}}\}$,
$\<K1> \mathrel{\eqdef} \phi$, $\<K2> \mathrel{\eqdef} \phi$, and
$\<Object> \mathrel{\eqdef} \phi$.
\end{example}
\end{sloppy}



\subsection{Semantic Domains}\label{subsec:domains}
A running program consists of a pool of threads that share a memory.
Initially, a single thread runs the main method. The $\fork{E.m(\,)}$ command
adds a new thread 
to the existing ones. 
Each thread has an activation stack $S$ and
a set $\Loc$ of locations that it currently locks.
The activation stack $S$ is a stack of activation records $R$ of methods.
Each $R$ consists of the identifier $\kappa.m$ of the method, the command $C$ to be executed
when $R$ will be on top of the stack (\emph{continuation}),
and the \emph{environment} or binding $\sigma$ that provides values to the variables in scope in $R$\@.
For simplicity, we only have classes and no primitive types, so
the only possible \emph{values} are locations. Formally, 
\(\states \; \eqdef \; \{ \sigma : \varset \rightharpoonup \locations  \mid \dom{\sigma}\textrm{ is finite}\}.
\)
\begin{definition}
\emph{Activation records}, ranged over by $R$,
 \emph{activation stacks}, ranged over by $S$, and
\emph{thread pools}, ranged over by $T$,  are defined as follows:
\vspace*{-.5mm}
\begin{center}
\(
\begin{array}{rcll}
R & \Bdf &  \record{\kappa.m}{C}{\sigma} & \text{(activation record for $\kappa.m$)}\\
S & \Bdf & \emptystack \Bor R\stsep S & \text{(activation stack, possibly empty)}\\
T & \Bdf & \thread{S}{\lockset} \Bor T\parallel T\quad & \text{(thread pool)}.
\end{array}
\)
\vspace*{-.5mm}
\end{center}
The number of threads in $T$ is written as $\#T$.
\end{definition} 

An object $o$ is a triple  
containing 
the object's class, an environment
binding its fields to their corresponding values, and  
a lock, \textit{i.e.}, an integer
counter incremented whenever a thread locks the object (locks are re-entrant). A \emph{memory} $\mu$ 
maps a finite set of already allocated
 memory locations into \emph{objects}.
\begin{definition}
\emph{Objects} and \emph{memories} are defined as 
$
\objects \eqdef \classes \times \states \times\mathbb{N}$ and 
$\memory \eqdef \{\mu : \locations \rightharpoonup  \objects  \mid \dom{\mu}\textrm{ is finite}\}$, 
with selectors
$\cls{o}\eqdef \kappa$, $\state{o}\eqdef \sigma$ and $\lockson{o}\eqdef n$, 
for every $o = \langle \kappa, \sigma, n\rangle \in \objects$.
We  also define $\extends{o}{f}{l}\eqdef\triplet{\kappa}{\extends{\sigma}{f}{l}}{n}$ and
$\addlock{o} \eqdef \triplet{\kappa}{\sigma}{n{+}1}$ and $\remlock{o} \eqdef \triplet{\kappa}{\sigma}{\max(0, n{-}1)}$.
\end{definition}
For simplicity, we do not model delayed publication of field updates, allowed in 
the Java memory model, as that is not relevant for our semantics and results.
Our goal is to identify expressions definitely locked at selected program points
and locking operations are immediately published in the Java memory model.
Hence, our memory model is a deterministic map shared by all threads.

The \emph{evaluation of an expression} $E$ in an environment $\sigma$ and in
a memory $\mu$, written $\eval{E}{\mu}{\sigma}$, yields a pair $\langle l,\mu'\rangle$, where $l$ is a location (the runtime
value of $E$) and $\mu'$ is the memory resulting 
after the evaluation of $E$.
Given a pair $\langle l,\mu \rangle$ we use
selectors $\loc{\langle l,\mu \rangle}=l$ and 
$\mem{\langle l,\mu \rangle}=\mu$.
\begin{definition}[Evaluation of Expressions]
\label{def:evaluation_expressions}
The evaluation function  has the type 
$
 \eval{\ }{}{} : (\expressions \times \states \times \memory) \rightharpoonup (\locations \times \memory)
$
and is defined as: \vspace*{-1ex}
\begin{align*}
  \eval{x}{\mu}{\sigma}  \deff \anglepair{\sigma(x)}{\mu}\qquad\Q
  \eval{E.f}{\mu}{\sigma} & \deff \anglepair{\state{\mu'(l)}(f)}{\mu'},\text{ where}\eval{E}{\mu}{\sigma}{=}\anglepair{l}{\mu'} \\
  \eval{\objinit{\kappa}{f_1{=}E_1,..,f_n{=}E_n}}{\mu}{\sigma} & \deff \anglepair{l}{\extends{\mu_n}{l}{\triplet{\kappa}{\sigma'}{0}}}\text{, where}
\end{align*}
\vspace*{-5ex}
\begin{align*}
  (1)\, & \mu_0 = \mu\text{ and }\anglepair{l_i}{\mu_i}=\eval{E_i}{\mu_{i-1}}{\sigma},\textrm{ for }i\in[1.. n]\\
  (2)\, & \text{$l$ is \emph{fresh} in $\mu_n$, that is $\mu_{n}(l)\uparrow$} \\
  (3)\, & \sigma'\in\states\text{ is such that\ }
        \sigma'(f_i)=l_i\textrm{ for }i\in[1..n]\text{, while }\sigma'(y){\uparrow}\text{ elsewhere}. \\[-7mm]
\end{align*} 
We assume that $\eval{\ }{}{}$ is undefined if any of the function applications is undefined.
\end{definition}
In the evaluation of the object creation expression, a fresh location $l$ is allocated and
bound to an unlocked object whose environment $\sigma'$ binds its fields to the values of the
corresponding initialization expressions.
\subsection{Structural Operational Semantics}\label{subsec:sos}

Our operational semantics is given in terms of a \emph{reduction relation}
on \emph{configurations} of the form $\anglepair{T}{\mu}$, where $T$ is a pool of threads
and $\mu$ is a memory that models the heap of the system. 
We write $\anglepair{T}{\mu}  \reduces{n} \anglepair{T'}{\mu'}$ for representing 
an execution step 
in which $n\ge 1$ denotes the position of the thread in $T$ that fires the transition,
starting from the leftmost in the pool $T$ (thread $1$).
We write $\reduces{}$ instead of $\reduces{n}$ to abstract on the running thread;
$\reduces{}^\ast$ denotes the reflexive and transitive closure of $\reduces{}$. 
We first introduce
reduction rules
where the activation stack consists of a single activation record,
then lift to the general case.

\begin{table}[t]
\(
\begin{array}{l}
  \infer[\textrm{[decl]}]{\anglepair{\thread{\record{\kappa.m}{\dec{x=E}}{\sigma}}{\lockset}}{\mu} 
             \reduces{1} \anglepair{\thread{\record{\kappa.m}{\void}{\sigma'}}{\lockset}}{\mu'}}
        {\eval{E}{\mu}{\sigma} = \anglepair{l}{\mu'} \quad \sigma(x)\!\uparrow \quad \sigma'\eqdef \extends{\sigma}{x}{l}}
\\[13pt]
 \infer[\textrm{[var-ass]}]{\anglepair{\thread{\record{\kappa.m}{x\assign E}{\sigma}}{\lockset}}{\mu} 
        	\reduces{1} \anglepair{\thread{\record{\kappa.m}{\void}{\sigma'}}{\lockset}}{\mu'}}
        {\eval{E}{\mu}{\sigma} = \anglepair{l}{\mu'} \quad \sigma(x)\!\downarrow \quad \sigma'\eqdef \extends{\sigma}{x}{l}}
\\[13pt]
  \infer[\textrm{[field-ass]}]{\anglepair{\thread{\record{\kappa.m}{x.f\assign E}{\sigma}}{\lockset}}{\mu}
             \reduces{1} \anglepair{\thread{\record{\kappa.m}{\void}{\sigma}}{\lockset}}{\mu''}}
        {\eval{E}{\mu}{\sigma} = \anglepair{l}{\mu'} \quad o = \mu(\sigma(x)) \quad o'\eqdef\extends{o}{f}{l} \quad \mu''\eqdef\extends{\mu'}{\sigma(x)}{o'}}
\\[13pt]
  \infer[\textrm{[seq]}]{\anglepair{\thread{\record{\kappa.m}{C_1; C_2}{\sigma}}{\lockset}}{\mu}
              \reduces{1} \anglepair{\thread{\record{\kappa.m}{C'_1; C_2}{\sigma'}}{\lockset'}}{\mu'}}
        {\anglepair{\thread{\record{\kappa.m}{C_1}{\sigma}}{\lockset}}{\mu}
              \reduces{1} \anglepair{\thread{\record{\kappa.m}{C'_1}{\sigma'}}{\lockset'}}{\mu'} \q\, \textrm{$C_1\neq E.p()$ \q\, $C_1\neq \fork{E.p()}$}}
\\[13pt]
  \infer[\textrm{[seq-skip]}]{\anglepair{\thread{\record{\kappa.m}{\void; C}{\sigma}}{\lockset}}{\mu}
               \reduces{1} \anglepair{\thread{\record{\kappa.m}{C}{\sigma}}{\lockset}}{\mu}}
        {-}
\\[13pt]
  \infer[\textrm{[invoc]}]{\anglepair{\thread{\record{\kappa.m}{E.p(\;); C}{\sigma}\stsep S}{\lockset}}{\mu}
               \reduces{1}{\anglepair{\thread{\record{{\kappa'}{.}p}{B}{\{\this\mapsto l\}} \stsep \record{\kappa.m}{C}{\sigma} \stsep S}{\lockset}}{\mu'}}}
        {\eval{E}{\mu}{\sigma} = \anglepair{l}{\mu'} \quad  \kappa' = \lookup{\cls{\mu'(l)},p} \quad \kappa'(p)=B}
\end{array}
\)
\caption{Structural operational semantics for sequential commands.}
\label{tab:SOS-sequential}
\end{table}
Table~\ref{tab:SOS-sequential} deals with 
 sequential commands. In rule $\textrm{[decl]}$ an 
undefined variable $x$ is declared. 
Rules $\textrm{[var-ass]}$ and $\textrm{[field-ass]}$ formalize variable and field assignment, respectively. 
Rule $\textrm{[seq]}$ assumes that the 
first command 
is not of the form $E.p(\;)$ or 
$\fork{E.p(\;)}$;
these two cases are treated separately. In rule $\textrm{[invoc]}$ 
the receiver $E$ is evaluated and the method implementation is looked up
from the dynamic class of the receiver. The body of the method is put on top 
of the activation stack and is executed
from an initial state where only variable $\this$ is in scope, bound to the receiver. 
Unlike previous rules, this rule deals with the whole activation stack
rather than assuming only a single activation record.

\begin{table}[t]
\(
\begin{array}{l}
  \infer[\textrm{[spawn]}]{\anglepair{\thread{\record{\kappa.m}{\fork{E.p(\;)}; C}{\sigma}\stsep S}{\lockset}}{\mu}
               \reduces{1}{\anglepair{\thread{\record{{\kappa'}{.}p}{B}{\{\this\mapsto l\}} \stsep \epsilon}{\emptyset} \parallel \thread{\record{\kappa.m}{C}{\sigma} \stsep S}{\lockset}}{\mu'}}}
        {\eval{E}{\mu}{\sigma} = \anglepair{l}{\mu'} \quad \kappa' = \lookup{\cls{\mu'(l)},p} \quad \kappa'(p)=B}
\\[12pt]
  \infer[\textrm{[sync]}]{\anglepair{\thread{\record{\kappa.m}{\synch{E}{C}}{\sigma}}{\lockset}}{\mu}
               \reduces{1}\anglepair{\thread{\record{\kappa.m}{\lock{l};C;\unlock{l}}{\sigma}}{\lockset}}{\mu'}}
         {\eval{E}{\mu}{\sigma} = \langle l,\mu'\rangle}
\\[12pt]
  \infer[\textrm{[acquire-lock]}]{\anglepair{\thread{\record{\kappa.m}{\lock{l}}{\sigma}}{\lockset}}{\mu}
               \reduces{1}\anglepair{\thread{\record{\kappa.m}{\void}{\sigma}}{\lockset'}}{\mu'}}
         { \lockson{\mu(l)} = 0  \quad  \lockset'\eqdef \lockset \cup \{ l \} \quad \mu' \eqdef \extends{\mu}{l}{\addlock{\mu(l)}}}
\\[12pt]
  \infer[\textrm{[reentrant-lock]}]{\anglepair{\thread{\record{\kappa.m}{\lock{l}}{\sigma}}{\lockset}}{\mu}
               \reduces{1}\anglepair{\thread{\record{\kappa.m}{\void}{\sigma}}{\lockset}}{\mu'}}
         { l\in\lockset \quad \mu' \eqdef \extends{\mu}{l}{\addlock{\mu(l)}}}
\\[12pt]
  \infer[\textrm{[decrease-lock]}]{\anglepair{\thread{\record{\kappa.m}{\unlock{l}}{\sigma}}{\lockset}}{\mu}
               \reduces{1}\anglepair{\thread{\record{\kappa.m}{\void}{\sigma}}{\lockset}}{\mu'}}
         {
\lockson{\mu(l)}>1 \quad \mu' \eqdef \extends{\mu}{l}{\remlock{\mu(l)}} }
\\[12pt]
  \infer[\textrm{[release-lock]}]{\anglepair{\thread{\record{\kappa.m}{\unlock{l}}{\sigma}}{\lockset}}{\mu}
               \reduces{1}\anglepair{\thread{\record{\kappa.m}{\void}{\sigma}}{\lockset'}}{\mu'}}
         {
\lockson{\mu(l)} = 1 \quad \lockset'\eqdef\lockset\setminus\{l\} \quad \mu' \eqdef \extends{\mu}{l}{\remlock{\mu(l)}}}
\end{array}
\)
\caption{Structural operational semantics for concurrency and synchronization.}
\label{tab:SOS-concurrent}
\end{table}
\begin{table}[t!]
\(
\begin{array}{ll}
  \infer[\textrm{[push]}]{\anglepair{\thread{R{::} S}{\lockset}}{\mu}
               \reduces{1}\anglepair{\thread{R'{::} S}{\lockset'}}{\mu'} }
        {\anglepair{\thread{R}{\lockset}}{\mu}
               \reduces{1}\anglepair{\thread{R'}{\lockset'}}{\mu'} }        
&
  \infer[\textrm{[pop]}]{\anglepair{\thread{\record{\kappa.m}{\void}{\sigma}{::} S}{\lockset}}{\mu}
              \reduces{1}\anglepair{\thread{S}{\lockset}}{\mu}}
        {-}
\\[6pt]
  \infer[\textrm{[par-l]}]{\anglepair{T_1 \parallel T_2}{\mu} 
              \reduces{n}\anglepair{T_1' \parallel T_2}{\mu'}}
        {\anglepair{T_1}{\mu}\reduces{n}\anglepair{T_1'}{\mu'}}
 &
  \infer[\textrm{[par-r]}]{\anglepair{T_1 \parallel T_2}{\mu} 
              \reduces{\#T_1 + n}\anglepair{T_1 \parallel T_2'}{\mu'}}
        {\anglepair{T_2}{\mu}\reduces{n}\anglepair{T_2'}{\mu'}}
\\[6pt]
  \infer[\textrm{[end-l]}]{\anglepair{\thread{\emptystack}{\lockset} \parallel T}{\mu}
              \reduces{1}\anglepair{T}{\mu}}
        {-}
&    
  \infer[\textrm{[end-r]}]{\anglepair{T \parallel \thread{\emptystack}{\lockset}}{\mu}
               \reduces{\#T + 1}\anglepair{T}{\mu}}
        {-}
\end{array}
\)
\caption{Structural operational semantics: structural rules.}
\label{tab:SOS-structural}
\end{table}
Table~\ref{tab:SOS-concurrent} focuses on concurrency and synchronization. 
The spawn of a new method is similar to a method call, but the
method body runs in its own new thread with an initially empty set of locked locations.  In rule $\textrm{[sync]}$ the location $l$ associated to the guard $E$ is computed; the computation can proceed only if a lock action is possible 
on $l$. 
The lock will be released only at the 
end of the critical section $C$. Rule $\textrm{[acquire-lock]}$ models 
the  entering of the monitor of an unlocked object. 
%
%
Rule $\textrm{[reentrant-lock]}$ models Java's \emph{lock reentrancy}. 
Rule $\textrm{[decrease-lock]}$ decreases the lock counter of an object that still remains locked, as it was
locked more than once. 
When the lock counter reaches $0$, rule $\textrm{[release-lock]}$ can fire to release the lock of the object. 

In Table~\ref{tab:SOS-structural}, rule $\textrm{[push]}$
lifts the execution of an activation record to that of a 
stack of activation records. The remaining structural 
rules are straightforward. 
\begin{definition}[Operational Semantics of a Program]
The initial configuration of a program is $\anglepair{T_0}{\mu_0}$ where
$T_0 \eqdef \thread{\record{\initk.\mathit{main}}{B_{\mathit{main}}}{\{\this\mapsto\initl\}}}{\emptyset}$
and $\mu_0 \eqdef \mapdef{\initl\mapsto\triplet{\initk}{\phi}{0}}$.
The \emph{operational semantics} of a program is the set of traces of the form
$
  \anglepair{T_0}{\mu_0}\ \reduces{}^\ast\ \anglepair{T}{\mu}.
$
\end{definition}

\begin{example}\label{ex:simple_semantics}
The implementation in Ex.~\ref{ex:simple}, 
becomes a program by defining $B_{\mathit{main}}$ as:
${\small
\objinit{\<K>}{\,
\<x> = \objinit{\<K1>}{\<f> = \objinit{\<K2>}{\<g> = \objinit{\<Object>}{}}},\,
\<y> = \objinit{\<K2>}{\<g> = \objinit{\<Object>}}
}.\<m>();\void.}
$
The operational semantics builds the following maximal trace
from $\anglepair{T_0}{\mu_0}$ that, for convenience, we divide in eight macro-steps:\\[2pt]
\(
{\small
\begin{array}{lcl}
1.\; & \reduces{}^{\ast} & \anglepair{\thread{\record{\<K>.\<m>}{\dec{\<z> = \this.\<x>};\ldots}{\sigma_1}\stsep\record{\initk.\mathit{main}}{\void}{\{\this\mapsto\initl\}}}{\emptyset}}{\mu_1} \\
& & 
\textrm{with } 
\mu_1 \eqdef \mu_0[l\mapsto o, l_1\mapsto o_1, l_2\mapsto o_2, l_3\mapsto o_3, l_4\mapsto o_4, l_5\mapsto o_4]; \\
& & 
o \eqdef \triplet{\<K>}{\{\<x>\mapsto l_1, \<y>\mapsto l_2\}}{0}; \;
o_1 \eqdef \triplet{\<K1>}{\{\<f>\mapsto l_3\}}{0}; \;
o_2 \eqdef \triplet{\<K2>}{\{\<g>\mapsto l_4\}}{0};\\
& & 
o_3 \eqdef \triplet{\<K2>}{\{\<g>\mapsto l_5\}}{0}; \;
o_4 \eqdef \triplet{\<Object>}{\phi}{0}; \;
\sigma_1 \eqdef \{\this\mapsto l\}
\\
2. & \reduces{}^{\ast} & \anglepair{\thread{\record{\<K>.\<m>}{\dec{\<w> = \objinit{\<Object>}{}};\ldots}{\sigma_2}\stsep\ldots}{\emptyset}}{\mu_1}
\textrm{ with }\sigma_2 \eqdef \extends{\sigma_1}{\<z>}{l_1} \\
3. & \reduces{}^{\ast} & \anglepair{\thread{\record{\<K>.\<m>}{\synch{\<z>}{\ldots};\ldots}{\sigma_3}\stsep\ldots}{\emptyset}}{\mu_2}
\textrm{ with } 
\mu_2 \eqdef \mu_1[l_6 \mapsto o_4 ]; \;
\sigma_3 \eqdef \extends{\sigma_2}{\<w>}{l_6}
\\
4. & \reduces{}^{\ast} & \anglepair{\thread{\record{\<K>.\<m>}{\this.\<y> \assign \<z>.\<f>;\ldots;\unlock{l_1};\ldots}{\sigma_3}\stsep\ldots}{\{l_1\}}}{\mu_3} \\
& &  
\textrm{with } 
\mu_3 \eqdef \mu_2[l_1 \mapsto \addlock{o_1}]
\\
5. & \reduces{}^{\ast} & \anglepair{\thread{\record{\<K>.\<m>}{\<w> \assign \this.\<y>;\unlock{l_1};\ldots}{\sigma_3}\stsep\ldots}{\{l_1\}}}{\mu_4} \\
& & 
\textrm{with } 
\mu_4 \eqdef \mu_3[l \mapsto o']; \;
o' \eqdef \triplet{\<K>}{\{\<x>\mapsto l_1, \<y>\mapsto l_3\}}{0} 
\\
6. & \reduces{}^{\ast} & \anglepair{\thread{\record{\<K>.\<m>}{\unlock{l_1};\<w>.\<g> :=  \objinit{\<Object>}{};\void}{\sigma_4}\stsep\ldots}{\{l_1\}}}{\mu_4} \textrm{ with }\sigma_4 \eqdef \extends{\sigma_3}{\<w>}{l_3}
\\
7. & \reduces{}^{\ast} & \anglepair{\thread{\record{\<K>.\<m>}{\<w>.\<g> :=  \objinit{\<Object>}{};\void}{\sigma_4}\stsep\ldots}{\emptyset}}{\mu_5} \textrm{ with } \mu_5 \eqdef \mu_4[l_1 \mapsto {o_1}] \\
8. & \reduces{}^{\ast} & \anglepair{\thread{\record{\initk.\mathit{main}}{\void}{\{\this\mapsto\initl\}}}{\emptyset}}{\mu_6} \textrm{ with }\mu_6 \eqdef \mu_5[l_5 \mapsto o_4].
\end{array}
}
\)
\end{example}

\label{sec:ex:simple_semantics}
Our semantics lets us formalize some properties
on the soundness of the locking mechanism, that we report
in Appendix~\ref{sec:properties}.
Here we just report a key property used in our proofs,
that states
that two threads never lock the same location (\emph{i.e.}, object)
at the same time. It is proved by induction on the length of the trace.
\looseness=-1

\begin{proposition}
\label{prop:lock-threads}
Let $\anglepair{T_0}{\mu_0}\trans{}^*\anglepair{\thread{S_1}{\lockset_1}\parallel
\ldots\parallel \thread{S_n}{\lockset_n}\,}{\mu}$ be an arbitrary trace.
For any $i,j\in\{1\ldots n\}$, $i\neq j$ entails $\lockset_i\cap\lockset_j = \emptyset$.
\end{proposition}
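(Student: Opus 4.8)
The plan is to proceed by induction on the length $k$ of the trace $\anglepair{T_0}{\mu_0}\trans{}^{k}\anglepair{T}{\mu}$, strengthening the statement slightly so that the induction goes through smoothly. The base case $k=0$ is immediate: $T_0$ is a single thread with $\lockset=\emptyset$, so there is nothing to check. For the inductive step, assume the disjointness invariant holds for the configuration reached after $k$ steps, and consider one more reduction $\anglepair{T}{\mu}\reduces{n}\anglepair{T'}{\mu'}$. I would combine this lock-disjointness invariant with a companion invariant that ties the \emph{lock sets} recorded in threads to the \emph{lock counters} stored in memory — concretely, that for every location $l$, the counter $\lockson{\mu(l)}$ equals the sum over all threads of the number of times $l$ has been (re-entrantly) locked by that thread, and in particular $l\in\lockset_i$ for \emph{at most one} $i$, and that $i$ is the unique thread with a positive contribution. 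Without carrying this companion fact along, the induction on disjointness alone is not obviously self-supporting, because whether a new thread may acquire $l$ depends on $\lockson{\mu(l)}$, not directly on the other threads' lock sets.

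**Case analysis on the last rule.** Most rules do not touch any $\lockset_i$ at all: $\textrm{[decl]}$, $\textrm{[var-ass]}$, $\textrm{[field-ass]}$, $\textrm{[seq-skip]}$, $\textrm{[invoc]}$, $\textrm{[sync]}$, $\textrm{[reentrant-lock]}$, $\textrm{[decrease-lock]}$, the structural rules $\textrm{[push]}$, $\textrm{[pop]}$, $\textrm{[par-l]}$, $\textrm{[par-r]}$, and $\textrm{[seq]}$ (the latter by appeal to the sub-derivation), as well as $\textrm{[end-l]}$ and $\textrm{[end-r]}$ — for these the lock sets are either unchanged or only rearranged/duplicated syntactically, so disjointness is preserved trivially (for $\textrm{[end-l]}$/$\textrm{[end-r]}$ one also needs that a terminating thread has an empty lock set, which follows from the companion invariant since $\emptystack$ contains no pending $\unlock{}$; this is the one place a well-formedness remark about balanced $\lock{}/\unlock{}$ is convenient). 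The only rules that enlarge a lock set are $\textrm{[acquire-lock]}$ and $\textrm{[spawn]}$, and $\textrm{[spawn]}$ creates the new thread with $\lockset=\emptyset$, so it cannot break disjointness. That leaves $\textrm{[release-lock]}$, which only shrinks a lock set, hence is harmless as well.

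**The crucial rule.** The real content is the $\textrm{[acquire-lock]}$ case: thread $i$ fires $\lock{l}$, its side condition is $\lockson{\mu(l)}=0$, and it sets $\lockset_i'\eqdef\lockset_i\cup\{l\}$. I must show $l\notin\lockset_j$ for every $j\neq i$. This is exactly where the companion invariant pays off: since $\lockson{\mu(l)}=0$ and the counter equals the total number of outstanding (re-entrant) locks held on $l$ across all threads, no thread currently holds $l$, so in particular $l\notin\lockset_j$ for all $j$; adding $l$ to $\lockset_i$ keeps all pairwise intersections empty. I also have to re-establish the companion invariant after the step: $\textrm{[acquire-lock]}$ increments $\lockson{\mu(l)}$ from $0$ to $1$ via $\addlock{}$ and adds $l$ to exactly one lock set, $\textrm{[reentrant-lock]}$ increments the counter of an $l$ already in $\lockset_i$ by one and leaves sets untouched, and $\textrm{[decrease-lock]}$/$\textrm{[release-lock]}$ mirror these; all other rules leave both counters and lock sets of every location alone.

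**Main obstacle.** The only genuine subtlety, and the step I expect to spend the most care on, is getting the companion invariant precise enough to be inductive while still implying single-owner disjointness — in particular handling re-entrancy correctly (a counter greater than $1$ must correspond to \emph{repeated} locking by the \emph{same} thread, never to two distinct threads each holding it once) and confirming that $\textrm{[reentrant-lock]}$'s side condition $l\in\lockset$ genuinely forces the re-entering thread to be the current owner. Everything else is a routine rule-by-rule check. Since the paper's context only asks for the disjointness statement, in the write-up I would phrase the companion fact as an auxiliary lemma proved by the same induction, or simply fold it into a single strengthened induction hypothesis.
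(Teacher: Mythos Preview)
Your proposal is correct and follows the same approach as the paper, whose entire proof is the single line ``by induction on the length of the trace.'' Your plan fleshes this out in detail; in particular, your observation that a companion invariant linking the memory's lock counters to the threads' lock sets is needed to discharge the \textrm{[acquire-lock]} case is precisely the content that makes the one-line induction go through.
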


\section{Two Semantics for \<@GuardedBy> Annotations}\label{sec:guardedBy}
This section gives two distinct formalizations
for locking specifications
 of the form \<@GuardedBy(\|E|) \emph{Type} x>,
where $E$ is any expression allowed by 
the language, possibly using a special
variable \<itself> that stands for the protected entity.
\subsection{Name-Protection Semantics}\label{subsec:byname}
In a \emph{name-protection} interpretation,
a thread must hold the lock on the value of the guard expression
whenever it \emph{accesses} (reads or writes) the \emph{name} of the 
guarded variable/field. 
Def.~\ref{def:access} formalizes the notion of
\emph{accessing an expression} when a given command is executed.
For our purposes, it is enough to consider a single execution step; thus
the accesses in $C_1;C_2$ are only those in $C_1$.
When an object is created, only its creating thread can access it.
Thus field initialization cannot originate data races and is
not considered as an access.
The access refers to the value of the expression, not
to its lock counter, hence $\synch{E}{C}$ does not access $E$\@.
For accesses to a field $f$, Def.~\ref{def:access}
keeps the exact expression used for the container of $f$, that
will be used in Def.~\ref{def:guardedByFields} for the contextualization of \<this>.
\begin{definition}[Expressions Accessed in a Single Reduction Step]
\label{def:access}
The set of expressions accessed 
in a single execution step is defined as follows:\\[2pt]
\noindent
{\em 
\(
\begin{array}{@{\hspace*{2mm}}rcl@{\hspace*{6mm}}rcl}
\acc{x} & \eqdef & \{x\} &\acc{E.f} & \eqdef & \acc{E} \cup \{E.f\} \\ 
\multicolumn{4}{r}{
\acc{\objinit{\kappa}{f_1{=} E_1,\ldots,f_n {=} E_n}}} & \eqdef &
\bigcup_{i=1}^{n}\acc{E_i}
\\
  \acc{\dec{x = E}} &\eqdef& \acc{E} 
&
  \acc{x \assign E} &\eqdef& \acc{x} \cup \acc{E} \\
 \acc{C_1;C_2} &\eqdef& \acc{C_1}
&
 \acc{x.f \assign E} &\eqdef& \acc{x.f} \cup \acc{E} 
  \\

 \acc{E.m(\,)} &\eqdef& \acc{E} 
&
  \acc{\fork{E.m(\,)}} &\eqdef& \acc{E}  \\

 \acc{\lock{l}} & \eqdef & \emptyset
&
\acc{\unlock{l}} & \eqdef & \emptyset
\\
 \acc{\synch{E.f}{C}} &\eqdef& \acc{E}
&
\acc{\synch{x}{C}} &\eqdef& \emptyset \; \eqdef \;  \acc{\void}\; 
\\
\multicolumn{6}{c}{\acc{\synch{\objinit{\kappa}{f_1 = E_1,\ldots,f_n = E_n}}{C}} \, \eqdef \, \acc{\objinit{\kappa}{f_1 {=} E_1,\ldots,f_n {=}E_n}}.}
\end{array}
\)
}

\noindent
We say that a command $C$ \emph{accesses} a variable $x$  if and only if $x\in\acc{C}$;
we say that $C$ \emph{accesses} a field $f$ if and only if $E.f\in\acc{C}$, for some expression $E$.
\end{definition}

We now define \<@GuardedBy> for local variables
(Def.~\ref{def:guardedByVars}) and for fields (Def.~\ref{def:guardedByFields}).
In Sec.~\ref{sec:examples} we have already discussed the reasons for  
using the special variable \<itself> in the guard expressions 
when working with a value-protection semantics. In the name-protection semantics,   \<itself> denotes just an alias of the accessed name: 
\<@GuardedBy(itself) \emph{Type} x> 
is the same as \<@GuardedBy(x) \emph{Type} x>.
\begin{definition}[\<@GuardedBy> for Local Variables]
\label{def:guardedByVars}
A local variable $x$ of a method $\kappa.m$ in a program is 
\emph{name protected} by {\em $\guardedby{E}$}
if and only if for every derivation
$
  \anglepair{T_0}{\mu_0}\ \trans{}^*\ \anglepair{T}{\mu}\reduces{n}\cdots
$
in which the $n$-th thread in $T$ is $\thread{\record{\kappa.m}{C}{\sigma}\stsep S}{\lockset}$,
whenever $C$ accesses $x$ we have $\loc{\eval{E}{\mu}{\sigma[\mathtt{itself}\mapsto\sigma(x)]}} \in \lockset$.
\end{definition}
\begin{example}
In Ex.~\ref{ex:simple_semantics}, 
variable \<z> of $\<K.m>$ is name protected by \<@GuardedBy(this.x)> since
the name \<z> is accessed at the macro-step $5$ only, where
$\eval{\<this.x>}{\mu_3}{\sigma_3[\mathtt{itself}\mapsto l_1]}=\anglepair{l_1}{\mu_3}$; during those reductions, the current
thread holds the lock on the object bound to $l_1$. According to
Def.~\ref{def:access}, macro-steps $2$ and $3$ do not contain accesses since 
they are declarations; macro-step $4$ does not access \<z> since it is a synchronization.
\end{example}
\begin{definition}[\<@GuardedBy> for Fields]
\label{def:guardedByFields}
A field $f$ in a program is \emph{name protected} by {\em $\guardedby{E}$}
if and only if for every trace 
$
  \anglepair{T_0}{\mu_0}\ \trans{}^*\ \anglepair{T}{\mu}\reduces{n}\ldots
$
in which the $n$-th thread in $T$ is $\thread{\record{\kappa.m}{C}{\sigma}\stsep S}{\lockset}$, whenever $C$ accesses $f$,  i.e.\
 $E'.f\in\acc{C}$, for some $E'$, with  $\eval{E'}{\mu}{\sigma}=\anglepair{l'}{\mu'}$ and $l''=\state{\mu'(l')}f$, we have 
$\loc{\eval{E}{\mu'}{\sigma[\mathtt{this}\mapsto l',\mathtt{itself}\mapsto {l''}]}}\in\lockset$.
\end{definition}
Notice that the guard expression $E$ is evaluated in a memory $\mu'$ obtained 
by the evaluation of the container of $f$, that is $E'$, and in an environment where  
the special variable \<this> is bound to $l'$, \emph{i.e.} the evaluation of the container of $f$.
\begin{remark}
\label{rem:program-points}
Def.~\ref{def:guardedByVars} and~\ref{def:guardedByFields}
evaluate the guard $E$ at those program points where
$x$ is accessed, in order to verify that its lock is held by the current 
thread. Hence $E$ can only refer to \<itself> and variables in scope at those
points, and for its evaluation we must use the current
environment $\sigma$. A similar observation holds
for the corresponding definitions for the value-protection semantics
in next section.
\end{remark}
\begin{example}
In Ex.~\ref{ex:simple_semantics}, field \<y> is name protected by \<@GuardedBy(this.x)>.
It is accessed at macro-step $5$, 
where $\eval{\<this.x>}{\mu_3}{\sigma_3[\mathtt{this}\mapsto l, \,
\mathtt{itself}\mapsto l_1]} = \anglepair{l_1}{\mu_3}$, and
at macro-step $6$, where $\eval{\<this.x>}{\mu_4}{\sigma_4[\mathtt{this}\mapsto l,\,
\mathtt{itself}\mapsto l_1]} = \anglepair{l_1}{\mu_4}$. In both cases,
the active and only thread holds the lock on the object bound to $l_1$.
\end{example}

\subsection{Value-Protection Semantics}
\label{sec:gb-value}
An alternative semantics for \<@GuardedBy> protects the values held in variables or
fields rather than their name.
In this \emph{value-protection} semantics, a variable $x$ is $\guardedby{E}$ if
wherever a thread dereferences a location $l$ eventually bound to $x$,
it holds the lock on the object obtained by evaluating $E$ at that point.
In object-oriented parlance, \emph{dereferencing a location $l$} means accessing the object stored at
$l$ in order to read or write a field.
In Java, accesses to the lock counter are synchronized at a low level and
the class tag is immutable, hence their accesses cannot give rise to data races and
are not relevant here. Dereferences (Def.~\ref{def:deref})
are very different from  accesses (Def.~\ref{def:access}).
For instance, statement \<v.f := w.g.h> accesses expressions
\<v>, \<v.f>, \<w>, \<w.g> and \<w.g.h>
but dereferences only the locations held in \<v>, \<w> and \<w.g>: locations bound to \<v.f> and \<w.g.h> are left untouched.
Def.~\ref{def:deref}
formalizes the set of locations dereferenced by an
expression or command to access some field and keeps track
of the fact that the access is for reading ($\rightarrow$)
or writing ($\leftarrow$) the field. Hence dereference tokens are $l.f\!\!\leftarrow$
or $l.f\!\!\rightarrow$, where $l$ is a location
and $f$ is the name of the field that is accessed in the object held in $l$.
\begin{definition}[Dereferenced Locations]\label{def:deref}
Given a memory $\mu$ and an environment $\sigma$, 
the dereferences in a single reduction are defined as follows:\\[1mm]
\noindent
{\em 
\(\begin{array}{rclcrcl}
\multicolumn{7}{c}{
\rfrc{x}{\mu}{\sigma}
 \eqdef \emptyset \Q\Q
\qquad\rfrc{E.f}{\mu}{\sigma}   \eqdef  \left\{\loc{\eval{E}{\mu}{\sigma}}.\mathord{f\!\!\rightarrow}\right\} {\cup} \rfrc{E}{\mu}{\sigma}}\\ 

\multicolumn{7}{c}{\rfrc{\objinit{\kappa}{f_1 = E_1,\ldots,f_n = E_n}}{\mu}{\sigma}   \eqdef  \bigcup^{n}_{i = 1}\rfrc{E_i}{\mu}{\sigma}
}
\\
 \rfrc{\dec{x = E}}{\mu}{\sigma} & \eqdef & \rfrc{E}{\mu}{\sigma} &&
 \rfrc{x \assign E}{\mu}{\sigma} & \eqdef &\rfrc{E}{\mu}{\sigma}\\
\rfrc{\synch{E}{C}}{\mu}{\sigma} & \eqdef &\rfrc{E}{\mu}{\sigma}
&&
 \rfrc{C_1;C_2}{\mu}{\sigma} & \eqdef &\rfrc{C_1}{\mu}{\sigma}
 \\
\rfrc{\lock{l}}{\mu}{\sigma} & \eqdef &\emptyset 
  &&
\rfrc{x.f \assign E}{\mu}{\sigma} &\eqdef& 
{\small \mbox{$\{\sigma(x).\mathord{f\!\!\leftarrow}\}{\cup}\rfrc{E}{\mu}{\sigma}$}}
\\
 \rfrc{\unlock{l}}{\mu}{\sigma} &\eqdef&\emptyset
&&
 \rfrc{\void}{\mu}{\sigma}& \eqdef &\emptyset\\
\multicolumn{7}{c}{ \rfrc{E.m(\,)}{\mu}{\sigma} \q \eqdef \q \rfrc{\fork{E.m(\,)}}{\mu}{\sigma} \q \eqdef\q
   \rfrc{E}{\mu}{\sigma} \enspace .  }
\end{array}
\)
}\\
Its projection on locations is
$\mathit{derefloc}(C)_\sigma^\mu=\{l\mid\text{there is }f\text{ such that }l.\mathord{f\!\!\leftarrow}\in\rfrc{C}{\mu}{\sigma}
\text{ or }l.\mathord{f\!\!\rightarrow}\in\rfrc{C}{\mu}{\sigma}\}$.
\end{definition}


Def.~\ref{def:gardedByValueVariables} fixes an arbitrary execution trace $t$
and collects the set $\mathcal{L}$ of locations that have ever been 
bound to $x$ in $t$. Then, it requires that whenever a thread 
dereferences one of those locations, that thread must hold
the lock on the object obtained by evaluating the guard $E$.
\begin{definition}[\<@GuardedBy> for Local Variables] 
\label{def:gardedByValueVariables}
A local variable $x$ of a method $\kappa.m$ in a program is \emph{value-protected} by
{\em $\guardedby{E}$} if and only if
for any derivation $\anglepair{T_0}{\mu_0}\reduces{n_0}\cdots\reduces{n_{i-1}} 
\anglepair{T_{i}}{\mu_{i}} \reduces{n_{i}} \ldots$, letting 
\begin{itemize}
\item $T^n_j=\thread{\record{k^n_j.m^n_j}{C^n_j}{\sigma^n_j}\stsep S^n_j}{\lockset^n_j}$ be the $n$-th thread of the pool $T_j$, for $j > 0$
\item $\mathcal{L}=\bigcup_{j > 0}\{\sigma^{{n_j}}_j(x) \: \mid \: 
  k^{{n_j}}_j.m^{{n_j}}_j{=}\kappa.m\text{ and }\sigma^{{n_j}}_j(x)\!\downarrow \}$ be the set of locations eventually associated to variable $x$
\item 
  $\mathcal{X}_i=\mathit{derefloc}(C^{n_i}_{i})^{\mu_{i}}_{\sigma^{n_i}_{i}}\cap\mathcal{L}$
  be those locations in $\mathcal{L}$ dereferenced at step $\reduces{n_i}$.
\end{itemize}
Then, for every $l\in\mathcal{X}_i$ it follows that 
$\loc{\eval{E}{\mu_{i}}{\sigma^{n_i}_{i}[\mathtt{itself}\mapsto l]}} \in\lockset^{n_i}_{i}$. 
\end{definition}
Note that $\mathcal{L}$ contains all locations eventually bound to $x$, also in the past,
  not just those bound to $x$ in the last configuration $\anglepair{T_i}{\mu_i}$. This is because
  the value of $x$ might change during the execution of the program
  and flow through aliasing into other variables, that later get dereferenced.
\begin{example}
\label{Ex:gb-value-variable}
In Ex.~\ref{ex:simple_semantics} variable \<z> is value-protected by $\guardedby{\itself}$.
The set $\mathcal{X}$ for \<z> of Def.~\ref{def:gardedByValueVariables} is $\{l_1\}$. 
Location $l_1$ is only dereferenced at macro-step $5$, where 
the corresponding object $o_1$ is accessed to obtain the value of its field $f$.
At that program point, location $l_1$ is locked by the current thread.
\end{example}
\begin{definition}[\<@GuardedBy> for Fields]
\label{def:gardedByValueFields}
A field $f$ in a program is \emph{value-protected} by {\em $\guardedby{E}$}
if and only if for any derivation
$
  \anglepair{T_0}{\mu_0} \reduces{n_0} \cdots \reduces{n_{i-1}}
  \anglepair{T_i}{\mu_i} \reduces{n_{i}}\cdots,
$
letting
\begin{itemize}
\item $T^n_j=\thread{\record{k^n_j.m^n_j}{C^n_j}{\sigma^n_j}\stsep S^n_j}{\lockset^n_j}$ be the $n$-th thread of the pool $T_j$, for $j > 0$
\item $\mathcal{L}=\bigcup_{j>0}\{\,\state{\mu_j(l)}(f) \, \mid \, l{\in}\dom{\mu_j} 
\text{ and }\state{\mu_j(l)}(f)\!\downarrow \}$ be the set of locations 
eventually associated to field $f$
\item 
  $\mathcal{X}_i=\mathit{derefloc}(C^{n_i}_{i})^{\mu_{i}}_{\sigma^{n_i}_{i}}\cap\mathcal{L}$ be those locations in $\mathcal{L}$ dereferenced
  at step $\reduces{n_i}$. 
\end{itemize}
Then, for every $l\in\mathcal{X}_i$ it follows that 
$\loc{\eval{E}{\mu_{i}}{\sigma^{n_i}_{i}[\mathtt{itself}\mapsto l]}} \in\lockset^{n_i}_{i}$.
\end{definition}
\begin{example}
In Ex.~\ref{ex:simple_semantics} field \<x> is value-protected by $\guardedby{\itself}$. The set
$\mathcal{X}$ for \<x> of Def.~\ref{def:gardedByValueFields} is $\{l_1\}$
and we conclude as in Ex.~\ref{Ex:gb-value-variable}. 
\end{example}

\begin{remark}
The two semantics for \<@GuardedBy> are incomparable: neither entails the other. For instance, 
in Ex.~\ref{ex:simple_semantics} field \<x> is value protected by \<@GuardedBy(itself)>,
but is not name protected:
\<x> is accessed at macro-step~$1$. Field \<y> is
name protected by \<@GuardedBy(this.x)> but not value protected:
its value is accessed at macro-step~$8$ via \<w>. 
In some cases the two semantics do coincide. Variable
\<z> is \<@GuardedBy(itself)> in both semantics: its name and value
are only accessed at macro-step~$5$, where they are locked.
Variable \<w> is not \<@GuardedBy(itself)> according to any semantics:
its name and value are accessed at macro-step $8$.
\end{remark}

\section{Protection against Data Races}
\label{subsec:protection}

In this section we provide sufficient conditions that ban
data races when \<@GuardedBy> annotations are satisfied, in either of
the two versions of Sec.~\ref{subsec:byname} and~\ref{sec:gb-value}.
First, we formalize the notion of \emph{data race}.
Informally, a data race occurs when two threads $a$ and $b$ 
dereference the same location $l$, at the same time, to access a field
of the object stored at $l$ and at least one of them, say $a$, modifies the field.
We formalize below this definition for our language.
\begin{definition}[Data race]
\label{def:data_race}
Let $\anglepair{T_0}{\mu_0}\trans{}^{\ast}\anglepair{T}{\mu}$,
where $T_i=\thread{\record{k_i.m_i}{C_i}{\sigma_i}\stsep S_i}{\lockset_i}$ is the $i$-th thread of $T$. 
A \emph{data race} occurs at a location $l$ at $ \anglepair{T}{\mu}$
during the access to a field $f$ if there are $a\not=b$ such that
$\anglepair{T}{\mu} \reduces{a} \anglepair{T'}{\mu'}$,
$\anglepair{T}{\mu} \reduces{b} \anglepair{T''}{\mu''}$,
$l.\mathord{f\!\!\leftarrow}\in\rfrc{C_a}{\mu}{\sigma_a}$ and
($l.\mathord{f\!\!\leftarrow}\in\rfrc{C_b}{\mu}{\sigma_b}$ or $l.\mathord{f\!\!\rightarrow}\in\rfrc{C_b}{\mu}{\sigma_b}$).
\end{definition}

In Sec.~\ref{sec:examples} 
we  said  that accesses to  variables (and fields)
that are \<@GuardedBy($E$)> occur in mutual exclusion
if the guard $E$ is such that it can be evaluated at distinct program points and  its evaluation 
  always yields the same
  value. This means that $E$ cannot contain local variables as they 
  cannot be evaluated at distinct program points. 
Thus, we restrict the variables that can be used in $E$.
In particular, \<itself> can always be used since it refers to the 
location being
dereferenced. For the name-protection semantics for fields, \<this> can also be used,
since it refers to the container of the guarded field,
as long as it
can be uniquely determined; for instance, if there is no aliasing.
Indeed, Sec.~\ref{sec:examples} shows that name protection
without aliasing restrictions
does not ban data races, since it protects the
name but not its value, that can be freely aliased and accessed through other names, without
synchronization.
In a real programming language, \emph{aliasing} arises from
assignments, returned values, and parameter passing.
Our simple language has no returned values and only
the implicit parameter \<this>. 
\begin{definition}[Non-aliased variables and fields]
  \label{def:aliasing}
  Let $P$ be a program and $x$ a variable or field name.
  We say that a name $x$ is \emph{non-aliased} in $P$ if and only if for
  every arbitrary trace $\anglepair{T_0}{\mu_0} \reduces{}^{\ast} \anglepair{T}{\mu}$
  of $P$, where 
  $T_i=\thread{\record{k_i.m_i}{C_i}{\sigma_i}\stsep S_i}{\lockset_i}$ is the $i$-th thread of $T$,
  we have
\begin{itemize} 
\item whenever $\sigma_j(x)=l$, for some $j$ and $l$:
\begin{itemize}
\item there is no $y$, $y\neq x$, such that $\sigma_j(y)=l$
\item there is no $k$, $k\neq j$, such that $\sigma_k(y)=l$, for some $y$
\item there is no $l'$ such that $\state{\mu(l')}(y)=l$, for some $y$
\end{itemize}
\item whenever $\state{\mu(l')}(x)=l$, for some $l'$ and $l$:
\begin{itemize}
\item there are no $y$ and $j$ such that $\sigma_j(y)=l$
\item there are no $y$ and $l''$, $l'\neq l''$, such that $\state{\mu(l'')}(y)=l$.
\end{itemize}
\end{itemize}
\end{definition}
Checking if a name is non-aliased can be mechanized~\cite{Andersen94}
and prevented by syntactic restrictions.
Now, everything is in place to prove that, 
for non-aliased names,
the name-protection semantics of \<@GuardedBy> protects
against data races.
\begin{theorem}[Name-protection semantics 
vs.\ data race protection]
\label{prop:guarantee_by_name}
Let $E$ be an expression in a program, and 
$x$ be a \emph{non-aliased} variable or field that is name protected by {\em $\guardedby{E}$}.
If $x$ is a variable, let $E$ contain no variable distinct from {\em \<itself>\/};
if $x$ is a field, let $E$ contain no variable distinct from {\em \<itself>} and {\em \<this>\/}.
Then, no data race can occur at those locations bound to $x$, at
any execution trace of that program.
\end{theorem}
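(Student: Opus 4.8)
The plan is to argue by contradiction. Suppose some execution of the program reaches a configuration $\anglepair{T}{\mu}$ at which a data race occurs at a location $l$ bound to $x$ during an access to a field $f$, witnessed by two distinct threads $a\neq b$, where $T_i=\thread{\record{k_i.m_i}{C_i}{\sigma_i}\stsep S_i}{\lockset_i}$ is the $i$-th thread of $T$; by Def.~\ref{def:data_race} this means $l.\mathord{f\!\!\leftarrow}\in\rfrc{C_a}{\mu}{\sigma_a}$ and $l.\mathord{f\!\!\leftarrow}\in\rfrc{C_b}{\mu}{\sigma_b}$ or $l.\mathord{f\!\!\rightarrow}\in\rfrc{C_b}{\mu}{\sigma_b}$. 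I will show that $a$ and $b$ then hold the lock on one and the same location, contradicting Proposition~\ref{prop:lock-threads}.

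The technical heart of the proof is a lemma linking Def.~\ref{def:deref} to Def.~\ref{def:access}: if $\anglepair{T}{\mu}$ is reachable, $l$ is bound to the non-aliased name $x$, and $l\in\mathit{derefloc}(C)^{\mu}_{\sigma}$ for some thread of $T$ with environment $\sigma$ and continuation $C$, then $C$ accesses $x$ in the sense of Def.~\ref{def:access}; moreover, if $x$ is a variable then that thread is the \emph{unique} thread $j$ with $\sigma_j(x)=l$ and every dereference of $l$ caused by $C$ comes from the bare variable $x$, while if $x$ is a field then there is a \emph{unique} location $c$ with $\state{\mu(c)}(x)=l$ and every dereference of $l$ caused by $C$ comes from a subexpression $E'.x$ with $\loc{\eval{E'}{\mu}{\sigma}}=c$. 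I would prove this by a simultaneous structural induction on expressions and on commands, using the memory well-formedness invariants of Appendix~\ref{sec:properties} (every stored value is an allocated location, and object fields point only to allocated locations): non-aliasing of $x$ confines $l$ to a single slot of the runtime state, so an expression can evaluate to $l$ only by mentioning $x$ --- the one escape route, namely reaching $l$ through a freshly created object one of whose field initialisers equals $l$, reduces to a strictly smaller expression that again evaluates to $l$, and the induction closes. Note that this lemma is already quite strong: since the left-hand side of a field assignment is a bare variable, when $x$ is a field no thread can ever emit $l.\mathord{f\!\!\leftarrow}$, and when $x$ is a variable the uniqueness clause forces $a=b$; I will however finish through the lock argument below, since that is the argument that actually exploits the \<@GuardedBy> hypothesis.

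By the lemma, $C_a$ and $C_b$ both access $x$, and in the field case through one and the same container $c$. Since the race fires from $\anglepair{T}{\mu}$, Def.~\ref{def:guardedByVars} (for a variable) or Def.~\ref{def:guardedByFields} (for a field) applies to thread $a$ and to thread $b$ and yields $\loc{\eval{E}{\mu_a'}{\sigma_a[\this\mapsto c,\itself\mapsto l]}}\in\lockset_a$ and $\loc{\eval{E}{\mu_b'}{\sigma_b[\this\mapsto c,\itself\mapsto l]}}\in\lockset_b$ --- in the variable case there is no $\this$ binding and $\mu_a'=\mu_b'=\mu$, while in the field case $\mu_a'$ and $\mu_b'$ are the memories left by evaluating the two container expressions. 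Here the restriction on $E$ enters: all free variables of $E$ lie among $\itself$ and $\this$, and these receive identical values in the two evaluations; and the result of evaluating $E$ does not depend on $\mu_a'$ versus $\mu_b'$, because both agree with $\mu$ on all already-allocated locations and differ only on fresh ones, which are unreachable from the bindings of $\this$ and $\itself$. I would record this last point as a short determinacy lemma for $\eval{\ }{}{}$: running it on two memories that coincide on the locations reachable from the environment's range gives the same (non-fresh) location, up to the immaterial choice of fresh locations. Hence both evaluations return the same location $l_E$, necessarily already-allocated (fresh locations are never in a lock set), so $l_E\in\lockset_a\cap\lockset_b$. Since $a\neq b$, Proposition~\ref{prop:lock-threads} gives $\lockset_a\cap\lockset_b=\emptyset$, a contradiction; hence no such data race exists.

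I expect the dereference-implies-access lemma of the second paragraph to be the main obstacle: stating it so the induction goes through forces care about object-creation subexpressions, about the side effects of evaluating a field's container (so that the guard of a field annotation is in fact evaluated in a perturbed memory), and about the read/write bookkeeping of Def.~\ref{def:deref}. The determinacy lemma of the third paragraph is routine but must be isolated explicitly, precisely because Def.~\ref{def:guardedByFields} evaluates the guard in the memory produced by the container rather than in $\mu$ itself.
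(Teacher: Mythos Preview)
Your argument follows the paper's scheme exactly: contradiction, use non-aliasing to exclude the local-variable case, then in the field case show both racing threads lock the same guard value and invoke Proposition~\ref{prop:lock-threads}. You are, however, more careful than the paper on two points it leaves implicit. First, the paper simply asserts that a dereference of $l$ by thread $a$ yields some $E_a.x\in\acc{C_a}$; your dereference-implies-access lemma is precisely what is needed to justify this step. Second, the paper writes $\eval{E_a}{\mu}{\sigma_a}=\eval{E_b}{\mu}{\sigma_b}=\anglepair{l'}{\mu'}$ with a single common $\mu'$, which need not hold literally when $E_a$ and $E_b$ allocate distinct fresh objects; your determinacy lemma handles this properly. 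Your side observation---that in this calculus non-aliasing alone already kills the race, because a write dereference $l.\mathord{f\!\!\leftarrow}$ requires a bare variable holding $l$, and a non-aliased field value is held by no variable---is correct and sharper than anything in the paper; it exposes a limitation of the core syntax (field assignment is restricted to $y.f\assign E$ with $y$ a variable), and the lock-based argument you then give is the one that would survive in a language with general left-hand sides.
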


The absence of aliasing is not necessary for the value-protection semantics. 
\begin{theorem}[Value-protection semantics  vs.\ data race protection]
\label{prop:guarantee}
Let $E$ be an expression in a program, and
$x$ be a variable/field that is value-protected by {\em \<@GuardedBy($E$)>.}
Let $E$ have no variable distinct from {\em \<itself>\/}.
Then no data race can occur at those locations bound to $x$, during 
any execution of the program. 
\end{theorem}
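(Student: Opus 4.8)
The plan is to argue by contradiction: suppose a data race occurs at some location $l$ bound to $x$ during access to a field $g$, at a reachable configuration $\anglepair{T}{\mu}$ with $\anglepair{T_0}{\mu_0}\trans{}^{\ast}\anglepair{T}{\mu}$. By Def.~\ref{def:data_race} there are two distinct threads $a\neq b$ with $l.\mathord{g\!\!\leftarrow}\in\rfrc{C_a}{\mu}{\sigma_a}$ and with $l.\mathord{g\!\!\leftarrow}$ or $l.\mathord{g\!\!\rightarrow}$ in $\rfrc{C_b}{\mu}{\sigma_b}$. In particular $l\in\mathit{derefloc}(C_a)^{\mu}_{\sigma_a}$ and $l\in\mathit{derefloc}(C_b)^{\mu}_{\sigma_b}$, and since $l$ is bound to $x$ somewhere along this trace, $l$ belongs to the set $\mathcal{L}$ of Def.~\ref{def:gardedByValueVariables} (or Def.~\ref{def:gardedByValueFields}, for the field case). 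Hence $l\in\mathcal{X}$ for \emph{both} the thread-$a$ step and the thread-$b$ step. Applying the value-protection hypothesis to each of the two steps, we obtain $\loc{\eval{E}{\mu}{\sigma_a[\mathtt{itself}\mapsto l]}}\in\lockset_a$ and $\loc{\eval{E}{\mu}{\sigma_b[\mathtt{itself}\mapsto l]}}\in\lockset_b$.

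The next step is to show these two locked locations are actually the \emph{same} location. Here I would use the hypothesis that $E$ contains no variable other than \<itself>: then the value of $\eval{E}{\mu}{\sigma[\mathtt{itself}\mapsto l]}$ depends only on $\mu$ and on $l$, not on the rest of the environment $\sigma$, because the evaluation of $E$ (Def.~\ref{def:evaluation_expressions}) only consults $\sigma$ through its free variables, and the only free variable is \<itself>, which is bound to $l$ in both cases. One subtlety: expression evaluation can allocate fresh locations (via $\objinit{\kappa}{\cdots}$) and therefore side-effect the memory, so $\eval{E}{\mu}{\sigma_a[\cdots]}$ and $\eval{E}{\mu}{\sigma_b[\cdots]}$ are run against the \emph{same} $\mu$ but could in principle diverge on fresh-location choices; however the \emph{returned} location is determined structurally by $\mu$ and $l$ — fresh locations are only stored, never returned as the value of a subexpression whose value is then the value of $E$ unless $E$ itself is an $\objinit$, and even then the two evaluations pick "the same" fresh slot relative to $\mu$. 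I would make this precise with a small lemma: if $E$ has no free variable other than \<itself>, then $\loc{\eval{E}{\mu}{\sigma[\mathtt{itself}\mapsto l]}}$ is independent of $\sigma$. Call this common location $l_E$. So $l_E\in\lockset_a\cap\lockset_b$.

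Finally, $a\neq b$, so by Proposition~\ref{prop:lock-threads} we have $\lockset_a\cap\lockset_b=\emptyset$, contradicting $l_E\in\lockset_a\cap\lockset_b$. This completes the argument. The field case (Def.~\ref{def:gardedByValueFields}) is identical except that membership of $l$ in $\mathcal{L}$ is witnessed by $l$ being the value of field $f$ in some object along the trace rather than the value of variable $x$ in some environment; the rest goes through verbatim.

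The main obstacle I anticipate is the auxiliary lemma about evaluation being environment-independent when $E$'s only free variable is \<itself>, and in particular handling the memory side-effects of $\objinit$ cleanly — one must argue that two evaluations of the same $E$ in the same starting memory, differing only in irrelevant environment entries, return the same location (even if the resulting memories differ by names of freshly allocated cells). A clean way is to prove by structural induction on $E$ that $\loc{\eval{E}{\mu}{\sigma}}$ depends only on $\mu$ and $\sigma\restriction_{\mathit{fv}(E)}$; the allocation steps produce fresh locations by the same "$\mu_n(l)\uparrow$" clause, so one can pick a canonical fresh location and the returned value agrees. Once that lemma is in hand, the rest of the proof is the short contradiction sketched above, driven entirely by Proposition~\ref{prop:lock-threads}.
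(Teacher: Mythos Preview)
Your proposal is correct and follows essentially the same route as the paper's proof: assume a data race at some $l$ bound to $x$, use the value-protection hypothesis to place $\loc{\eval{E}{\mu}{\sigma_a[\mathtt{itself}\mapsto l]}}$ in $\lockset_a$ and $\loc{\eval{E}{\mu}{\sigma_b[\mathtt{itself}\mapsto l]}}$ in $\lockset_b$, observe that these coincide because \texttt{itself} is the only variable in $E$, and derive a contradiction with Proposition~\ref{prop:lock-threads}. The paper's proof simply asserts that ``environments $\sigma_i^a$ and $\sigma_i^b$ are irrelevant in these evaluations'' without further justification, whereas you are more scrupulous in flagging the allocation side-effects of $\objinit$ and proposing an explicit environment-independence lemma; this extra care is reasonable but not something the paper itself works out.
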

Both results are proved by contradiction, by supposing that a data race 
occurs and showing that
two threads would lock the same location, against
Prop.~\ref{prop:lock-threads}

\section{Implementation in Julia}\label{sec:julia}
The Julia static analyzer infers \<@GuardedBy> annotations. The user
selects the name-protection or the value-protection semantics.
As discussed in Sec.~\ref{sec:examples}, and then formalized in
Sec.~\ref{sec:guardedBy}, a \<@GuardedBy($E$)> annotation holds for
a variable or field $x$ if, at all program points $P$
where $x$ is accessed
(for name protection) or one of its locations is dereferenced
(for value protection), the value of $E$ is locked by the current thread.
The inference algorithm of Julia builds on two phases:
%
(i) compute $P$; 
(ii) find expressions $E$ locked at all program points in $P$. 

Point (i) is obvious for name protection, since accesses to $x$
are syntactically apparent in the program. For value protection,
the set $P$ is instead undecidable, since there might be
infinitely many objects potentially bound to $x$ at runtime,
that flow through aliasing.
Hence Julia overapproximates $P$ by abstracting objects into their
\emph{creation point} in the program: if two objects have
distinct creation points, they must be distinct. The number of creation
points is finite, hence the approximation is finitely computable.
Julia implements creation points analysis as a concretization of the
class analysis in~\cite{PalsbergS91}, where objects are abstracted in their
creation points instead of just their class tag.

Point (ii) uses the \emph{definite aliasing} analysis of Julia,
described in~\cite{NikolicS12}. At each \<synchronized($G$)> statement,
that analysis provides a set $L$ of expressions that are definitely an alias
of $G$ at that statement
(\emph{i.e.}, their values coincide there, always).
Julia concludes that the expressions
in $L$ are locked by the current thread after the
\<synchronized($G$)> and until the end of its scope. Potential
side-effects might however invalidate that conclusion, possibly due
to concurrent threads. Hence, Julia only allows in $L$
fields that are never modified
after being defined, which can be inferred syntactically for a field.
For name protection, viewpoint adaptation of \<this> is
performed on such expressions (Def.~\ref{def:guardedByFields}).
These sets $L$ are propagated in the program until they reach the points in
$P$. The expressions $E$ in point (ii) are hence those that belong to $L$ at
\emph{all} program points in $P$.

Since \<@GuardedBy($E$)> annotations are expected to be used by client
code, $E$ should be visible to the client. For instance, Julia
discards expressions $E$ that refer to a private field or to a local
variable that is not a parameter, since these would
not be visible nor useful to a client.

The supporting creation points and definite aliasing analyses are
sound, hence Julia soundly
infers \<@GuardedBy($E$)> annotations that satisfy the
formal definitions in Sec.~\ref{sec:guardedBy}. Such inferred
annotations protect against data races if the sufficient conditions
in Sec.~\ref{subsec:protection} hold for them.

More detail and experiments with this implementation can be found
in~\cite{ErnstLMST2015}.
\vspace*{-3mm}

\section{Conclusions, Future and Related Work}\label{sec:conclusion}
\enlargethispage{.75\baselineskip}
We have formalized two possible semantics for Java's \<@GuardedBy> annotations.
Coming back to the ambiguities sketched in Sec.~\ref{sec:introduction},
we have clarified that:
(1) \<this> in the guard expression must be interpreted as the container of the
guarded field and consistently contextualized (Def.~\ref{def:guardedByFields}).
(2) An access is a variable/field use for name protection
(Def.~\ref{def:access}, \ref{def:guardedByVars}, and~\ref{def:guardedByFields}).
A value access is a dereference (field get/set or
method call) for value protection;
copying a value is not an access in this case
(Def.~\ref{def:deref}, \ref{def:gardedByValueVariables},
and~\ref{def:gardedByValueFields}).
(3) The value of the guard expression must be locked when a name or value
is accessed, regardless of how it is accessed for locking
(Def.~\ref{def:guardedByVars}, \ref{def:guardedByFields}, \ref{def:gardedByValueVariables},
and~\ref{def:gardedByValueFields}).
(4) The lock is taken on the value of the guard expression as evaluated
at the access to the
guarded variable or field (Def.~\ref{def:guardedByVars}, \ref{def:guardedByFields},
\ref{def:gardedByValueVariables}, and~\ref{def:gardedByValueFields}
and rule \textrm{[sync]}).
(5) Either the \emph{name} or the
\emph{value} of a variable can be guarded, but this choice leads
to very different semantics.
Namely, in the \emph{name-protection} semantics,
the lock must be held whenever the variable/field's name is accessed
(Def.~\ref{def:access}, \ref{def:guardedByVars}, and~\ref{def:guardedByFields}).
In the \emph{value-protection} semantics, the lock must be held whenever
the variable/field's value is accessed (Def.~\ref{def:deref}, \ref{def:gardedByValueVariables},
and~\ref{def:gardedByValueFields}),
regardless of what expression is used to access the value.
Both semantics yield a guarantee against data races, 
though name protection requires an aliasing restriction
(Th.~\ref{prop:guarantee_by_name} and~\ref{prop:guarantee}).
\looseness=-1

This work could be extended by enlarging the set of guard expressions that
protect against data races. In particular, we have found that programmers often
use \<this> in guard expressions, but in that case we have a proof
of protection only for the name-protection semantics at the moment.
Our simple language already admits local variables and global variables (in object-oriented
  languages, these are the fields of the objects). It could be further extended with
  static fields. We believe that 
  the protection results in Sec.~\ref{subsec:protection} still hold for them.
Another aspect to investigate is the scope
of the protection against data races. In this article, a single location is protected
(Def.~\ref{def:data_race}), not the whole tree of objects reachable from it:
our protection is shallow rather than deep. Deep protection is possibly more interesting
to the programmer, since it relates to a data structure as a whole, but it requires to
reason about boundaries and encapsulation of data structures.

\enlargethispage{1.25\baselineskip}

There are many other formalizations of the syntax and semantics of concurrent Java,
such as~\cite{Abraham-MummBRS02,CenciarelliKRW97}. There is a formalization that
also includes extensions to Java such as RMI~\cite{Nobuko07}. Our goal here is
the semantics of annotations such as \<@GuardedBy>. Hence we kept the
semantics of the language to the
minimum core needed for the formalization of those program annotations.
Another well-known formalization is Featherweight Java~\cite{IgarashiPW01},
a functional language that provides a formal kernel of
sequential Java. It does not include threads, nor assignment.
Thus, it is
not adequate to formalize data races, which need concurrency and assignments.
Middleweight Java~\cite{BiermanP03}
is a richer language, with states, assignments and object identity.
It is purely sequential, with no threads, and its formalization
is otherwise at a level of detail that is unnecessarily complex for the present work.
Welterweight Java~\cite{OstlundW10} is a formalization of a kernel of Java that includes
assignments to mutable data and threads. Our formalization is similar to theirs,
but it is simpler since we do not model aspects that are not relevant
to the definition of data races, such as subtyping.
The need of a formal specification for reasoning about Java's concurrency and for building
verification tools is recognized~\cite{CorbettDH02,LongL03,BogdanasRosu15} but we are
not aware of any formalization
of the semantics of Java's concurrency annotations. Our formalization will support
tools based on model-checking such as Java PathFinder~\cite{JPF} and
Bandera~\cite{HatcliffD01,Bandera}, on type-checking
such as the Checker Framework~\cite{DietlDEMS11}, or on
abstract interpretation such as Julia~\cite{Julia}.
Finally, our companion paper~\cite{ErnstLMST2015} presents the details of the implementation of the Julia analyzer and of a type-checker for \<@GuardedBy> annotations, together with extended experiments that show how these tools scale to large real software and provide useful results for programmers.


\bibliography{biblio,bib/bibstring-abbrev,bib/ernst}

\appendix

\section{Proofs of Sec~\ref{subsec:protection}}

\begin{theorem}[Name-protection semantics 
vs.\ data race protection]
Let $E$ be an expression in a program, and 
$x$ be a non-aliased variable or field that is name protected by {\em $\guardedby{E}$}.
If $x$ is a variable, let $E$ contain no variable distinct from {\em \<itself>\/};
if $x$ is a field, let $E$ contain no variable distinct from {\em \<itself>} and {\em \<this>\/}.
Then, no data race can occur at those locations bound to $x$, at
any execution trace of that program.
\end{theorem}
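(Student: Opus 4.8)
The plan is to proceed by contradiction, in the style announced for both protection theorems: assume a data race and exhibit two distinct threads that hold the lock on one and the same location, contradicting Prop~\ref{prop:lock-threads}. So suppose a data race occurs at a location $l$ that is bound to $x$ in some reachable configuration $\anglepair{T}{\mu}$, during the access to a field $f$. By Def~\ref{def:data_race} there are distinct threads $a\neq b$ of $T$, with current commands $C_a,C_b$ and environments $\sigma_a,\sigma_b$, with $\anglepair{T}{\mu}\reduces{a}\anglepair{T'}{\mu'}$ and $\anglepair{T}{\mu}\reduces{b}\anglepair{T''}{\mu''}$ for some $T',\mu',T'',\mu''$, such that $l.\mathord{f\!\!\leftarrow}\in\rfrc{C_a}{\mu}{\sigma_a}$ and either $l.\mathord{f\!\!\leftarrow}\in\rfrc{C_b}{\mu}{\sigma_b}$ or $l.\mathord{f\!\!\rightarrow}\in\rfrc{C_b}{\mu}{\sigma_b}$.

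The crux is a \emph{dereference-implies-access} lemma: if $x$ is non-aliased (Def~\ref{def:aliasing}) and $l$ is bound to $x$ at $\anglepair{T}{\mu}$, then whenever a thread of $T$ with current command $C$ and environment $\sigma$ satisfies $l.\mathord{g\!\!\leftarrow}\in\rfrc{C}{\mu}{\sigma}$ or $l.\mathord{g\!\!\rightarrow}\in\rfrc{C}{\mu}{\sigma}$, the command $C$ \emph{accesses} $x$ in the sense of Def~\ref{def:access}; moreover, when $x$ is a variable then $\sigma(x)=l$, and when $x$ is a field then some container $E'$ with $E'.x\in\acc{C}$ satisfies $\loc{\eval{E'}{\mu}{\sigma}}=l'$, where $l'$ is the unique location with $\state{\mu(l')}(x)=l$. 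I would prove this by structural induction on commands and expressions. By Def~\ref{def:deref} the only producers of dereference tokens are the field-write $y.g\assign E$, which emits $\sigma(y).\mathord{g\!\!\leftarrow}$, and the field-read $E'.g$, which emits $\loc{\eval{E'}{\mu}{\sigma}}.\mathord{g\!\!\rightarrow}$; so for a read one inducts on the expression $E'$ that evaluates to $l$, with the environment $\sigma$ fixed throughout. Non-aliasedness rules out every way of obtaining $l$ other than through $x$: it is the value of no other variable (killing $E'=y$ with $y\neq x$) and, for fields, of no field other than its unique container (killing $E'=E''.g$ with $\loc{\eval{E''}{\mu}{\sigma}}\in\dom{\mu}$, since expression evaluation never rewrites objects already in $\mu$). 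The only surviving case, $E'=E''.g$ with $\loc{\eval{E''}{\mu}{\sigma}}$ a freshly allocated location, forces the initializer bound to $g$ in the object-creation expression that allocates it — a subexpression of $C$ — to evaluate to $l$, and the induction hypothesis applies to that subexpression. Tracing the recursion back (it terminates at a plain variable, since object creation never yields a pre-existing location), $x$ ends up syntactically present in an accessed position of $C$, and in the variable case $\sigma(x)=l$.

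With the lemma, both semantic cases close. If $x$ is a \emph{field}, the token $l.\mathord{f\!\!\leftarrow}\in\rfrc{C_a}{\mu}{\sigma_a}$ can arise only from a subcommand $y.f\assign E$ with $\sigma_a(y)=l$, i.e.\ from a variable holding $l$; but non-aliasedness of the field $x$ forbids any variable from holding $l$, so the write required for a race cannot even occur. If $x$ is a \emph{variable}, the lemma applied to thread $a$ gives $x\in\acc{C_a}$ with $\sigma_a(x)=l$, hence by name protection (Def~\ref{def:guardedByVars}) the location $\loc{\eval{E}{\mu}{\sigma_a[\mathtt{itself}\mapsto l]}}$ lies in $\lockset_a$; applied to thread $b$ it likewise yields that $\loc{\eval{E}{\mu}{\sigma_b[\mathtt{itself}\mapsto l]}}$ lies in $\lockset_b$. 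Since $E$ contains no variable other than \<itself>, the value of $E$ depends on the environment only through the binding of \<itself>, which is $l$ in both evaluations, so $\loc{\eval{E}{\mu}{\sigma_a[\mathtt{itself}\mapsto l]}}=\loc{\eval{E}{\mu}{\sigma_b[\mathtt{itself}\mapsto l]}}$; this single location lies in both $\lockset_a$ and $\lockset_b$ with $a\neq b$, contradicting Prop~\ref{prop:lock-threads}.

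The main obstacle is the dereference-implies-access lemma, and within it the interaction with object creation: one must notice that a freshly allocated object can carry $l$ in a field only because some initializer evaluated to $l$, so that tracing back through nested creations keeps $x$ syntactically present in an accessed position; the plain non-aliasing bookkeeping — no other variable, no other field — is then routine. The conceptual point doing the real work is small but essential: a guard restricted to the special variables \<itself> (and, for fields, \<this>, evaluated against the unique container provided by non-aliasedness) evaluates to the \emph{same} location at every program point where the protected entity is touched, which is exactly what upgrades the guarantee that each accessing thread holds \emph{some} $E$-determined lock into the guarantee that all accessing threads hold \emph{one and the same} lock.
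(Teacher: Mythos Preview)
Your proof is correct and, interestingly, is the mirror image of the paper's. The paper disposes of the \emph{variable} case directly from non-aliasing---two distinct threads cannot both have a local variable bound to $l$---and then handles the \emph{field} case via name protection: it asserts (without isolating your dereference-implies-access lemma) that both threads must syntactically access $E_a.x$ and $E_b.x$, shows the two containers coincide because a non-aliased field has a unique container, and derives a shared lock contradicting Prop.~\ref{prop:lock-threads}. You do the opposite: you dispose of the field case directly (the grammar forces any write token $l.\mathord{f\!\!\leftarrow}$ to originate from $y.f\assign E$ with $\sigma_a(y)=l$, and Def.~\ref{def:aliasing} forbids any variable from holding $l$ when $x$ is a non-aliased field), and route the variable case through name protection.

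Two remarks. First, your explicit lemma bridging dereferences (Def.~\ref{def:deref}) to syntactic accesses (Def.~\ref{def:access}) fills a real gap in rigor; the paper leaps from ``data race at $l$'' to ``$E_a.x\in\acc{C_a}$'' without justification, and your induction tracing the value $l$ back through nested object creations is the right way to close it. Second, your variable-case detour through name protection is superfluous: once your lemma yields $\sigma_a(x)=l$ and $\sigma_b(x)=l$ with $a\neq b$, you already contradict the second sub-bullet of the first clause of Def.~\ref{def:aliasing}, before ever invoking Def.~\ref{def:guardedByVars}. Putting your field shortcut together with the paper's variable shortcut in fact shows that the \<@GuardedBy> hypothesis is never actually used: under this very strong notion of non-aliasing, the annotation is not needed to exclude data races at locations bound to $x$.
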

\begin{proof} The proof is by contradiction. 
Let $
  \anglepair{T_0}{\mu_0} \reduces{}^{\ast} \anglepair{T}{\mu}
$
be an arbitrary trace of our program, where 
$T_i=\thread{\record{k_i.m_i}{C_i}{\sigma_i}\stsep S_i}{\lockset_i}$ is the $i$-th thread of $T$. 
By Def.~\ref{def:data_race}, 
if a data race occurred in $\anglepair{T}{\mu}$, at some location $l$,
bound to $x$, then $ \anglepair{T}{\mu}$  could evolve 
in at least two ways, say
\begin{center}
$\anglepair{T}{\mu} \trans{a} \anglepair{T'}{\mu'}$ and 
$\anglepair{T}{\mu} \trans{b} \anglepair{T''}{\mu''}$
\end{center}
that dereference $l$ in two different threads $a$ and $b$. 
As $x$ is non-aliased, by Def.~\ref{def:aliasing} it cannot be used in one 
thread as a variable and in the other as a field. 
As $a\neq b$, by Def.~\ref{def:aliasing} the name $x$ cannot be 
used in both threads as local variable bound to the same location $l$. Thus, there
is only one possibility:  
$x$ is a field accessed by both threads, $a$ and $b$, to dereference the location $l$. 
This means that there exist two expressions $E_a$ and $E_b$ 
such that ${E_a}.x \in \acc{C_a}$ and ${E_b}.x \in \acc{C_b}$.
As $x$ is non-aliased, by Def.~\ref{def:aliasing} 
there cannot be two different containers (objects) 
of the same field $x$. 
As a consequence, the two expressions $E_a$ and 
$E_b$ must evaluate to the same value, \emph{i.e.\/} 
$\eval{E_a}{\mu}{\sigma_a}=\eval{E_b}{\mu}{\sigma_b}=\anglepair{l'}{{\mu'}}$, for some $l'$ and $\mu'$. We recall that $\lockset_a$ and $\lockset_b$
denote the set of locations locked in $\anglepair{T}{\mu}$ by thread $a$ and $b$, respectively.  
As $x$ is name protected by \<@GuardedBy(>$E$\<)>, Def.~\ref{def:guardedByFields} 
entails that $
\loc{\eval{E}{\mu'}{\sigma_a[{\mathtt{this}}\mapsto {l'}][\mathtt{itself}\mapsto {l''}]}} \in \lockset_a
$  and 
$\loc{\eval{E}{\mu'}{\sigma_b[{\mathtt{this}}\mapsto {l'}][\mathtt{itself}\mapsto {l''}]}} \in \lockset_b
$, for $l''= \state{\mu'(l')}(x)$. 
As $x$ is a field, by hypothesis the guard expression $E$ may only 
contain the variables \<this> and \<itself>. As a consequence, 
$
\loc{\eval{E}{\mu'}{\sigma_a[{\mathtt{this}}\mapsto {l'}][\mathtt{itself}\mapsto {l''}]}}$
$ =$
$ \loc{\eval{E}{\mu'}{[{\mathtt{this}}\mapsto {l'}][\mathtt{itself}\mapsto {l''}]}}$
$=$
$
\loc{\eval{E}{\mu'}{\sigma_b[{\mathtt{this}}\mapsto {l'}][\mathtt{itself}\mapsto {l''}]}} 
$. 
By Prop.~\ref{prop:lock-threads} this is not possible as two  threads cannot lock the same location
at the same time.
\end{proof}

The requirement on the absence of aliasing is not necessary when 
working with a value-protection semantics. 
\begin{theorem}[Value-protection semantics of \<@GuardedBy> vs.\ data race protection]
Let $E$ be an expression in a program, and
$x$ be a variable or field that is value-protected by {\em \<@GuardedBy($E$)>\/}.
Let $E$ contain no variable distinct from {\em \<itself>\/}.
Then no data race can occur at those locations bound to $x$, during 
any execution trace of the program. 
\end{theorem}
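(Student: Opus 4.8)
\subsection*{Proof proposal}

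The plan is to mirror the structure of the preceding proof: argue by contradiction, extract from a hypothetical data race the fact that two distinct threads simultaneously hold the lock on one and the same location, and invoke Prop.~\ref{prop:lock-threads}. The value-protection case should actually be cleaner than the name-protection one, because here the guard $E$ may mention no variable other than \<itself>, so we never need to reason about the container of a field or about aliasing; the only facts we will use about the common configuration are the two dereferences and the lock sets of the two racing threads. In particular, no non-aliasing hypothesis on $x$ is needed anywhere.

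First I would unfold Def.~\ref{def:data_race}: assume $\anglepair{T_0}{\mu_0}\trans{}^{\ast}\anglepair{T}{\mu}$ and that a data race occurs at a location $l$ bound to $x$, during an access to some field $f$, witnessed by threads $a\neq b$ with $\anglepair{T}{\mu}\reduces{a}\anglepair{T'}{\mu'}$, $\anglepair{T}{\mu}\reduces{b}\anglepair{T''}{\mu''}$, $l.\mathord{f\!\!\leftarrow}\in\rfrc{C_a}{\mu}{\sigma_a}$ and $l.\mathord{f\!\!\leftarrow}\in\rfrc{C_b}{\mu}{\sigma_b}$ or $l.\mathord{f\!\!\rightarrow}\in\rfrc{C_b}{\mu}{\sigma_b}$, where $C_a,\sigma_a,\lockset_a$ (resp.\ $C_b,\sigma_b,\lockset_b$) are the continuation, environment and lock set of the top activation record of thread $a$ (resp.\ $b$) in $T$. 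By the projection on locations in Def.~\ref{def:deref}, both cases give $l\in\mathit{derefloc}(C_a)^{\mu}_{\sigma_a}$ and $l\in\mathit{derefloc}(C_b)^{\mu}_{\sigma_b}$. Moreover, since $l$ is bound to $x$ at some configuration of the prefix $\anglepair{T_0}{\mu_0}\trans{}^{\ast}\anglepair{T}{\mu}$, the location $l$ belongs to the set $\mathcal{L}$ of Def.~\ref{def:gardedByValueVariables} (if $x$ is a variable) or Def.~\ref{def:gardedByValueFields} (if $x$ is a field) computed along any derivation extending that prefix, because $\mathcal{L}$ is a union over all configurations of the derivation and hence only grows.

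Next I would apply the value-protection hypothesis twice, to two derivations that both extend the common prefix: one whose next step out of $\anglepair{T}{\mu}$ is $\reduces{a}$, and one whose next step is $\reduces{b}$. In the first derivation that step is some $\reduces{n_i}$ with $T_i=T$, $\mu_i=\mu$, $n_i=a$; since $l\in\mathit{derefloc}(C_a)^{\mu}_{\sigma_a}\cap\mathcal{L}=\mathcal{X}_i$, the hypothesis yields $\loc{\eval{E}{\mu}{\sigma_a[\mathtt{itself}\mapsto l]}}\in\lockset_a$. Symmetrically the second derivation gives $\loc{\eval{E}{\mu}{\sigma_b[\mathtt{itself}\mapsto l]}}\in\lockset_b$. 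Because $E$ contains no variable distinct from \<itself>, a routine induction on $E$ using Def.~\ref{def:evaluation_expressions} (in a fixed memory, the evaluation of $E$ depends on the environment only through the variables occurring in $E$) shows that $\eval{E}{\mu}{\sigma_a[\mathtt{itself}\mapsto l]}$ and $\eval{E}{\mu}{\sigma_b[\mathtt{itself}\mapsto l]}$ coincide, and depend on $\mu$ and $l$ only; let $l_E$ denote the location component of this common value. Then $l_E\in\lockset_a$ and $l_E\in\lockset_b$ with $a\neq b$, while $\anglepair{T}{\mu}$ is reachable from $\anglepair{T_0}{\mu_0}$, contradicting Prop.~\ref{prop:lock-threads}.

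The one delicate point is bookkeeping: one must check that the thread carrying the conflicting dereference (indexed $a$, resp.\ $b$, in Def.~\ref{def:data_race}) is exactly the thread whose top activation record and lock set are plugged into the value-protection condition, and that branching into two different derivations out of the shared configuration $\anglepair{T}{\mu}$ is legitimate --- it is, since the value-protection property is universally quantified over derivations and $l\in\mathcal{L}$ along each of them. Everything else --- the $\mathord{\leftarrow}/\mathord{\rightarrow}$ projection collapsing both kinds of access into a plain dereference, and the substitution argument that pins $\eval{E}{\mu}{\cdot}$ down to $\mu$ and $l$ --- is immediate.
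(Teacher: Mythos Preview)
Your proposal is correct and follows essentially the same approach as the paper's proof: argue by contradiction, use the value-protection hypothesis to conclude that both racing threads hold the lock on $\loc{\eval{E}{\mu}{[\mathtt{itself}\mapsto l]}}$, observe that since $E$ contains only \texttt{itself} the two evaluations coincide, and invoke Prop.~\ref{prop:lock-threads}. Your version is in fact more careful about the bookkeeping (explicitly checking $l\in\mathcal{L}$ and justifying the branching into two derivations), points the paper's proof leaves implicit.
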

\begin{proof}
Again, the proof is by contradiction. Let $
  \anglepair{T_0}{\mu_0}
    \reduces{n_0} \cdots  \reduces{n_{i-1}} 
\anglepair{T_i}{\mu_i}
$ be an arbitrary trace of our program. 
If a data race occurred at a location $l$ bound to $x$ 
in that trace, then that trace could evolve 
in at least two ways, both dereferencing $l$ but in distinct threads, say 
$a$ and $b$ (Def.~\ref{def:data_race}). 
Since $x$ is value protected by \<@GuardedBy($E$)>, both threads
lock the value (\textit{i.e.}, location) of $E$. Formally, 
by Def.~\ref{def:gardedByValueVariables} and~\ref{def:gardedByValueFields}
it follows that
$\loc{\eval{E}{\mu_i}{\sigma_i^{a}[\<itself>\mapsto l]}} \in \lockset_i^{a}$
and 
$\loc{\eval{E}{\mu_i}{\sigma_i^{b}[\<itself>\mapsto l]}} \in \lockset_i^{b}$. 
Since \<itself> is the only variable allowed in $E$,
environments $\sigma_i^a$ and $\sigma_i^b$ are irrelevant in these evaluations
and
$\loc{\eval{E}{\mu_i}{\sigma_i^{a}[\<itself>\mapsto l]}}=
\loc{\eval{E}{\mu_i}{[\<itself>\mapsto l]}}=
\loc{\eval{E}{\mu_i}{\sigma_i^{b}[\<itself>\mapsto l]}}$.
Again, by Prop.~\ref{prop:lock-threads}, this is impossible, as  
two  threads cannot lock the same location at the same time.   
\end{proof}

\section{Properties of the Operational Semantics}\label{sec:properties}
Let us provide a few properties showing the soundness 
of both the locking and unlocking mechanisms of our operational semantics. 

Two different threads never lock the same location:
\begin{proposition}[Locking vs.\ multithreading] 
Given an arbitrary execution trace 
\[
\anglepair{T_0}{\mu_0}\trans{}^*\anglepair{\thread{S_1}{\lockset_1}\parallel
\ldots\parallel \thread{S_n}{\lockset_n}\,}{\mu}
\]
then for any $i,j\in\{1\ldots n\}$, $i\neq j$ entails $\lockset_i\cap\lockset_j = \emptyset$.
\end{proposition}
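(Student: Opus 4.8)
The plan is to prove the statement by induction on the length of the trace $\anglepair{T_0}{\mu_0}\trans{}^*\anglepair{T}{\mu}$, but with the induction hypothesis strengthened by an auxiliary invariant tying the lock counters stored in memory to the locksets of the threads. For a thread $\thread{S_i}{\lockset_i}$ and a location $l$, let $u_i(l)$ be the number of occurrences of $\unlock{l}$ that are \emph{pending and unguarded} in $S_i$, i.e.\ that will be reached by the continuations in $S_i$ and are not preceded, inside the same continuation, by a still‑unexecuted $\lock{l}$. This quantity is well defined exactly because the programmer cannot write $\lock{}$/$\unlock{}$: every such command is introduced by rule $\textrm{[sync]}$, which rewrites $\synch{E}{C}$ into $\lock{l};C;\unlock{l}$, so a raw $\unlock{l}$ always appears initially paired with its matching $\lock{l}$. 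I would then carry through the induction the conjunction of: (A) $\lockset_i=\{\,l: u_i(l)>0\,\}$ for every thread $i$; (B) $\lockson{\mu(l)}=\sum_i u_i(l)$ for every $l\in\dom{\mu}$; and (C) at most one $i$ has $u_i(l)>0$. Clauses (A) and (C) immediately give $\lockset_i\cap\lockset_j=\emptyset$ for $i\neq j$, while (B) and (C) give the companion fact that $\lockson{\mu(l)}>0$ holds iff exactly one thread holds $l$.

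In the base case the only thread runs the program text $B_{\mathit{main}}$, which has no raw $\unlock{}$, so every $u_i(l)=0$, $\lockset=\emptyset$ and $\lockson{\mu_0(l)}=0$; (A)--(C) hold. For the inductive step I would do a case analysis on the last rule, with sub‑inductions on the derivation for $\textrm{[seq]}$ and $\textrm{[push]}$ and on the pool structure for $\textrm{[par-l]}$/$\textrm{[par-r]}$. Most cases are routine: $\textrm{[decl]}$, $\textrm{[var-ass]}$, $\textrm{[field-ass]}$, $\textrm{[seq-skip]}$, $\textrm{[invoc]}$ and $\textrm{[pop]}$ change neither locksets nor lock counters nor any pending unlock (and evaluating an expression only allocates fresh unlocked objects), so (A)--(C) persist; $\textrm{[spawn]}$ adds a thread with empty stack, hence empty lockset and all $u(l)=0$; $\textrm{[sync]}$ turns $\synch{E}{C}$ into $\lock{l};C;\unlock{l}$, leaving every $u_i(l)$ and every counter unchanged since the new $\unlock{l}$ is guarded by the new $\lock{l}$; and $\textrm{[end-l]}$/$\textrm{[end-r]}$ drop a thread whose stack is $\emptystack$, which by (A) has empty lockset, so the sums in (B) and clause (C) are untouched. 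For $\textrm{[acquire-lock]}$ on $l$ by thread $n$, the premise $\lockson{\mu(l)}=0$ forces, via (B) and (C) before the step, $u_i(l)=0$ for all $i$; executing the $\lock{l}$ activates the trailing $\unlock{l}$, raising $u_n(l)$ to $1$ and $\lockson{\mu(l)}$ to $1$, in agreement with the new lockset $\lockset_n\cup\{l\}$. For $\textrm{[reentrant-lock]}$ the thread already holds $l$ (so $u_n(l)\geq1$), both $u_n(l)$ and $\lockson{\mu(l)}$ grow by one, and $\lockset_n$ is unchanged.

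The crux is $\textrm{[release-lock]}$ (and, similarly, $\textrm{[decrease-lock]}$): nothing in the rule itself forbids a thread that is \emph{not} the holder of $l$ from firing it, which would violate (B); the auxiliary invariant is exactly what rules this out. The thread firing $\unlock{l}$ has an unguarded, pending $\unlock{l}$ at the head of its continuation, so $u_n(l)\geq1$, and then (C) forces $n$ to be the unique holder of $l$; by (B), $u_n(l)=\lockson{\mu(l)}$, which the premise pins to $1$, so consuming that $\unlock{l}$ drives both to $0$ and correctly removes $l$ from $\lockset_n$, re‑establishing (A)--(C) (for $\textrm{[decrease-lock]}$ the count is instead $>1$, so $l$ legitimately stays in $\lockset_n$). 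I expect the only genuinely delicate point to be stating the notion of ``pending and unguarded $\unlock{l}$'' precisely across a multi‑record stack and tracking it through the $\textrm{[invoc]}$/$\textrm{[pop]}$ and $\textrm{[seq]}$/$\textrm{[seq-skip]}$ interleavings; but the fact that raw locks always enter in matched pairs via $\textrm{[sync]}$ makes this bookkeeping mechanical. Reading off (A) and (C) at the final configuration with $n$ threads yields precisely $\lockset_i\cap\lockset_j=\emptyset$ for all $i\neq j$.
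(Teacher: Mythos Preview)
Your proposal is correct and follows the same overall approach as the paper, which merely states ``By induction on the length of the trace'' with no further detail. You have supplied what the paper omits: a workable strengthening of the inductive invariant. The naive invariant $\lockset_i\cap\lockset_j=\emptyset$ is not inductive on its own, because the premise of $\textrm{[acquire-lock]}$ speaks only of the counter $\lockson{\mu(l)}$ and the rules for $\unlock{l}$ do not syntactically require $l\in\lockset$; your clauses (A)--(C) are exactly what is needed to bridge counters, locksets, and the structural discipline imposed by $\textrm{[sync]}$. The only point to tighten in a full write-up is the one you already flag: the definition of $u_i(l)$ across a multi-record stack (so that $\textrm{[invoc]}$ pushing a fresh body $B$---which contains no raw $\unlock{}$---and $\textrm{[pop]}$ removing a record whose continuation is $\void$ both leave every $u_i(l)$ unchanged). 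With that made precise, the case analysis is routine.
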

\begin{proof}
By induction on the length of the trace.
\end{proof}

When a thread starts its execution  it does not hold any lock:
\begin{proposition}[Thread initialization vs.\ locking ]
Let 
\[
\anglepair{T_0}{\mu_0}\trans{}^*\anglepair{\thread{S_1}{\lockset_1}\parallel
\ldots\parallel \thread{S_n}{\lockset_n}\,}{\mu}
\trans{i} \anglepair{\thread{\hat{S}_1}{\hat{\lockset}_1}\parallel
\ldots\parallel \thread{\hat{S}_m}{\hat{\lockset}_m}\,}{\hat{\mu}}
\]
be an arbitrary trace where 
 $S_i = \record{\kappa.m}{\fork{E.p(\,)};C}{\sigma} \stsep S$,
 for some  $\kappa,m,E,p,C,\sigma$ and $S$, 
 then 
\begin{itemize}
\itemsep.13em
\item 
$\hat{\sigma}_i = \record{\kappa'.p}{B}{\sigma'}$, for appropriate $\kappa'$, $B$ and $\sigma'$
\item 
$\hat{\sigma}_{i+1} = \record{\kappa.m}{C}{\sigma} \stsep S$
\item 
$\hat{\lockset}_i = \emptyset$
\item 
$\hat{\lockset}_{i+1} = \lockset_i$. 
\end{itemize}
\end{proposition}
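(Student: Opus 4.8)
The plan is to prove the proposition by a direct inspection of the structural operational semantics. The step $\anglepair{\thread{S_1}{\lockset_1}\parallel\cdots\parallel\thread{S_n}{\lockset_n}}{\mu}\trans{i}\anglepair{\thread{\hat S_1}{\hat\lockset_1}\parallel\cdots\parallel\thread{\hat S_m}{\hat\lockset_m}}{\hat\mu}$ is obtained by a finite derivation in the rules of Tables~\ref{tab:SOS-sequential}, \ref{tab:SOS-concurrent} and~\ref{tab:SOS-structural}, and since the topmost activation record of the firing thread $i$ has continuation $\fork{E.p(\,)};C$, I would first show that the only axiom that can sit at the leaf of that derivation is $\textrm{[spawn]}$, applied to the whole stack $S_i$. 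Indeed, among the rules acting on a single thread, $\textrm{[decl]}$, $\textrm{[var-ass]}$, $\textrm{[field-ass]}$, $\textrm{[seq-skip]}$, $\textrm{[invoc]}$, $\textrm{[sync]}$, $\textrm{[acquire-lock]}$, $\textrm{[reentrant-lock]}$, $\textrm{[decrease-lock]}$ and $\textrm{[release-lock]}$ all require the continuation of the top record (resp.\ the top record itself) to have a shape other than $\fork{E.p(\,)};C$; rule $\textrm{[seq]}$ is inapplicable because its side condition forbids $C_1=\fork{E.p(\,)}$; $\textrm{[pop]}$ needs a $\void$ top record and $\textrm{[end-l]}/\textrm{[end-r]}$ an empty stack. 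Rule $\textrm{[push]}$ cannot be interposed either: its premise would have to be a single-thread reduction $\anglepair{\thread{\record{\kappa.m}{\fork{E.p(\,)};C}{\sigma}}{\lockset_i}}{\mu}\trans{}\anglepair{\thread{R'}{\lockset'}}{\mu'}$, but the only axiom matching that record is again $\textrm{[spawn]}$, whose right-hand side is a two-thread pool rather than a single thread, so the premise of $\textrm{[push]}$ can never be met here. Hence the reduction fired by thread $i$ is exactly an instance of $\textrm{[spawn]}$ on $S_i$.

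Next I would unwind the derivation. The structural rules $\textrm{[par-l]}/\textrm{[par-r]}$ serve only to locate the $i$-th thread inside $\thread{S_1}{\lockset_1}\parallel\cdots\parallel\thread{S_n}{\lockset_n}$; they leave every other thread unchanged, so that the threads originally at positions $1,\dots,i-1$ are carried over verbatim and the whole effect of the step is to replace $\thread{S_i}{\lockset_i}$ by what $\textrm{[spawn]}$ produces from it. Applying $\textrm{[spawn]}$, with $\eval{E}{\mu}{\sigma}=\anglepair{l}{\mu'}$, $\kappa'=\lookup{\cls{\mu'(l)},p}$ and $\kappa'(p)=B$ --- all defined, since by hypothesis the step does fire --- rewrites $\thread{S_i}{\lockset_i}$ into $\thread{\record{\kappa'.p}{B}{\{\this\mapsto l\}}\stsep\emptystack}{\emptyset}\parallel\thread{\record{\kappa.m}{C}{\sigma}\stsep S}{\lockset_i}$. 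Hence $m=n+1$, the freshly spawned thread sits at position $i$, the continuation of the original thread $i$ at position $i+1$, and the threads formerly at $i+1,\dots,n$ are shifted one place to the right; reading off: $\hat S_i=\record{\kappa'.p}{B}{\{\this\mapsto l\}}\stsep\emptystack$ (so $\sigma'=\{\this\mapsto l\}$ and $\kappa'=\lookup{\cls{\mu'(l)},p}$), $\hat\lockset_i=\emptyset$, $\hat S_{i+1}=\record{\kappa.m}{C}{\sigma}\stsep S$ and $\hat\lockset_{i+1}=\lockset_i$, which is the statement, written there with the customary abuse of identifying the new activation stack with its top record.

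I do not expect a genuine obstacle: this is a ``read it off the rules'' argument. The two points that need a little care are (a) the rule-exclusion step --- one must check that neither $\textrm{[push]}$ nor any sequential-command rule can be interposed when the head command is $\fork{E.p(\,)};C$, so that $\textrm{[spawn]}$ is really forced --- and (b) the index bookkeeping under $\textrm{[par-l]}/\textrm{[par-r]}$: replacing thread $i$ by the two threads that $\textrm{[spawn]}$ yields places the child at index $i$ and the parent at index $i+1$, which is exactly what makes $\hat\lockset_i=\emptyset$ and $\hat\lockset_{i+1}=\lockset_i$ match the child/parent roles of $\textrm{[spawn]}$. If one wants to avoid repeating (b), it can be packaged once as a small lemma on how $\textrm{[par-l]}/\textrm{[par-r]}$ renumber threads and then reused, e.g.\ in the proof of Prop.~\ref{prop:lock-threads}.
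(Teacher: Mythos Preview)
Your proposal is correct and follows the same approach as the paper: both argue by direct inspection of the operational semantics that \textrm{[spawn]} is the only rule applicable to the $i$-th thread in this configuration, and then read off the four conclusions from the shape of its right-hand side. The paper's proof is a one-line sketch to exactly this effect, whereas you spell out the rule-exclusion case analysis and the index bookkeeping under \textrm{[par-l]}/\textrm{[par-r]} in full; the extra detail is fine and the argument is sound.
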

\begin{proof} This is a direct conseuquence of the sematic rules 
[spawn], the only one which can be applied to perform the reduction 
step $\trans{i}$.
\end{proof}

When a thread terminates it does not keep locks on locations:
\begin{proposition}[Thread termination vs.\ locking ]
Let 
\[
\anglepair{T_0}{\mu_0}\trans{}^*\anglepair{\thread{S_1}{\lockset_1}\parallel
\ldots\parallel \thread{S_n}{\lockset_n}\,}{\mu}
\]
be an arbitrary run where 
 $S_i = \epsilon$, , then ${\lockset}_i = \emptyset$.
\end{proposition}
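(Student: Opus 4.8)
The plan is to prove the statement by strengthening it to an invariant that holds at every reachable configuration and is preserved by each step of $\reduces{}$; the bare claim ``empty stack implies empty lock set'' is not itself inductive, because it says nothing about how the locks a thread currently holds relate to the code it has still to run. For a thread $\thread{S}{\lockset}$ of a reachable configuration $\anglepair{T}{\mu}$ I would track three numbers: its \emph{held depth} $H$, which I take to be the number of $\textrm{[acquire-lock]}$ and $\textrm{[reentrant-lock]}$ steps that thread has performed so far minus the number of its $\textrm{[decrease-lock]}$ and $\textrm{[release-lock]}$ steps --- a quantity only that thread can change --- together with the number $L$ of occurrences of a command $\lock{l}$ among the continuations of the activation records in $S$ and the number $U$ of occurrences of a command $\unlock{l}$ among those continuations. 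The invariant to be proved is $H + L = U$, alongside the elementary side fact that $l \in \lockset$ implies $\lockson{\mu(l)} \ge 1$ (whence $H = \sum_{l \in \lockset}\lockson{\mu(l)} \ge |\lockset|$, by a routine check that a thread's four lock operations occur per location in the pattern: acquire, then a balanced run of reentrant and decrease steps, then release). Intuitively, every pending $\unlock{l}$ in the stack is answered either by the $\lock{l}$ still sitting in front of it, not yet executed (counted by $L$), or by a unit of lock depth already acquired for it (counted by $H$); and since locking and unlocking commands are never written by the programmer, they enter the code only in matched pairs, through rule $\textrm{[sync]}$. Given all this, the proposition is immediate: if $S_i = \epsilon$ then there are no continuations, so $L = U = 0$, hence $H = 0$, hence $|\lockset_i| \le H = 0$, that is $\lockset_i = \emptyset$.

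To establish the invariant I would induct on the length of the run, handling the inductive step by a sub-induction on the derivation of the last step. The base configuration $\anglepair{T_0}{\mu_0}$ has a single thread whose only continuation is the programmer-written body $B_{\mathit{main}}$, containing no $\lock{l}$ or $\unlock{l}$, so $L = U = 0$, and whose lock set is empty, so $H = 0$ and the side fact is vacuous. For the step: the stack-lifting rules $\textrm{[seq]}$, $\textrm{[push]}$, $\textrm{[par-l]}$, $\textrm{[par-r]}$ reduce to the sub-derivation, since the surrounding unchanged code and the untouched threads contribute equally to $H$, $L$, $U$ before and after; $\textrm{[seq-skip]}$, $\textrm{[pop]}$, $\textrm{[decl]}$, $\textrm{[var-ass]}$, $\textrm{[field-ass]}$ consume or delete a command containing no $\lock{l}$ or $\unlock{l}$ and touch no lock counters, leaving all three numbers fixed; $\textrm{[invoc]}$ and $\textrm{[spawn]}$ additionally push a programmer-written method body, again free of $\lock{l}$ and $\unlock{l}$; and $\textrm{[end-l]}$, $\textrm{[end-r]}$ drop a thread with empty stack, which by the invariant applied to that very thread has $\lockset = \emptyset$ and so disturbs nothing. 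The four rules that matter are $\textrm{[sync]}$, which rewrites $\synch{E}{C}$ to $\lock{l};C;\unlock{l}$, raising $L$ and $U$ each by $1$ and leaving $H$ untouched; $\textrm{[acquire-lock]}$ and $\textrm{[reentrant-lock]}$, which each consume one $\lock{l}$ (so $L$ drops by $1$) and count as an acquire (so $H$ rises by $1$); and $\textrm{[decrease-lock]}$ and $\textrm{[release-lock]}$, which each consume one $\unlock{l}$ (so $U$ drops by $1$) and count as a decrease (so $H$ drops by $1$). In each case $H + L = U$ is preserved for the firing thread, and trivially for every other thread, whose stack --- hence $L$ and $U$ --- and whose private count $H$ are all unchanged. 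The side fact is re-checked in the same pass: $l$ enters $\lockset$ only via $\textrm{[acquire-lock]}$, which leaves $\lockson{\mu(l)} = 1$; $\textrm{[reentrant-lock]}$ only raises the counter; and $\textrm{[decrease-lock]}$, $\textrm{[release-lock]}$ keep it at least $1$ for as long as $l$ stays in $\lockset$ --- here I would also appeal to Prop.~\ref{prop:lock-threads} and the well-bracketedness of the runtime code generated by $\textrm{[sync]}$, so that no other thread can be simultaneously decrementing or releasing the counter of a location in $\lockset$.

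The main obstacle is devising the strengthening rather than any particular computation. Two features of the calculus dictate the shape of the invariant. Re-entrancy makes $\textrm{[reentrant-lock]}$ and $\textrm{[decrease-lock]}$ consume a $\lock{l}$ or $\unlock{l}$ command without changing $\lockset$, so $H$ must measure lock \emph{depth} --- a signed count of lock operations --- not the cardinality of $\lockset$. And the transient configurations in which $\textrm{[sync]}$ has already emitted a $\lock{l}$ that has not yet executed force the term $L$ into the balance: there a pending $\unlock{l}$ is answered not by a held lock but only by the $\lock{l}$ in front of it. The remaining delicate point is the cross-thread bookkeeping --- ruling out that one thread's lock operation alters another's held depth or violates the side fact --- which is settled by Prop.~\ref{prop:lock-threads} together with the elementary well-bracketedness of the runtime code produced by $\textrm{[sync]}$; once the invariant is in place, every rule is verified routinely.
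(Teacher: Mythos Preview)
Your proof is correct and follows the paper's strategy --- induction on the length of the run --- but the paper's own proof consists of the single sentence ``By induction on the length of the reduction'' and leaves every detail implicit. You have supplied what that one-liner omits: the observation that the bare claim is not itself inductive, a workable strengthening $H + L = U$ tying held lock depth to the pending $\lock{}$/$\unlock{}$ commands still on the stack, and the per-location accounting (via well-bracketedness and Prop.~\ref{prop:lock-threads}) that recovers $|\lockset|\le H$ from it.
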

\begin{proof}
By  induction on the lenght of the reduction. 
\end{proof}

A thread may not lock a location by mistake:
\begin{proposition}[Locking]
Let 
\[
\anglepair{T_0}{\mu_0}\trans{}^*\anglepair{\thread{S_1}{\lockset_1}\parallel
\ldots\parallel \thread{S_n}{\lockset_n}\,}{\mu}
\trans{i} \anglepair{\thread{\hat{S}_1}{\hat{\lockset}_1}\parallel
\ldots\parallel \thread{\hat{S}_m}{\hat{\lockset}_m}\,}{\hat{\mu}}
\]
be an arbitrary run. 
Then  $ \bigcup_{j=1}^n{{\lockset}_j} \, \subset \, \bigcup_{j=1}^m{\hat{\lockset}_j}$,  if and only if 
\begin{itemize}
\itemsep.13em
\item 
 $S_i = \record{\kappa.m}{\lock{l};C'}{\sigma} \stsep S$,
 for some $\kappa,m,l,C',\sigma$ and $S$
\item 
 $\lockson{\mu(l)}=0$ and $\lockson{\hat{\mu}(l)}=1$
\item 
 $\hat{\lockset}_i = \lockset_i \uplus \{l \}$
\item
 $m=n$ and $\hat{\lockset}_j = \lockset_j$ for every $j\in\{1\ldots n\}\setminus\{i\}$.
\end{itemize}
\end{proposition}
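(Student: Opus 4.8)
The plan is to prove the two directions of the biconditional separately, in both cases by a case analysis on the single step $\anglepair{T}{\mu}\trans{i}\anglepair{\hat T}{\hat\mu}$ of the statement (with $\hat T$ the resulting thread pool) — that is, on which rule of Tables~\ref{tab:SOS-sequential}--\ref{tab:SOS-structural} (possibly lifted through the administrative rules $\textrm{[seq]}$, $\textrm{[push]}$, $\textrm{[par-l]}$, $\textrm{[par-r]}$) realizes it. No fresh induction is required here: I only invoke Proposition~\ref{prop:lock-threads} (two distinct threads never lock the same location) and the earlier property that a thread whose activation stack is empty holds no locks, both already available.

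For the implication ($\Leftarrow$), I would just take the four bullets as hypotheses and compute the union. Since $m=n$, $\hat\lockset_j=\lockset_j$ for every $j\neq i$, and $\hat\lockset_i=\lockset_i\uplus\{l\}$, we immediately get $\bigcup_{j=1}^{m}\hat\lockset_j=\{l\}\cup\bigcup_{j=1}^{n}\lockset_j$. The inclusion is strict because $l\notin\bigcup_{j=1}^{n}\lockset_j$: indeed $l\notin\lockset_i$ by the disjoint-union notation, and for $j\neq i$ the fact that $l\in\hat\lockset_i$ together with Proposition~\ref{prop:lock-threads} applied to the configuration $\anglepair{\hat T}{\hat\mu}$ reached after the step gives $l\notin\hat\lockset_j=\lockset_j$.

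For the implication ($\Rightarrow$), the real work is to show that $\bigcup_j\lockset_j$ can be strictly enlarged by exactly one rule. I would run through the rules: the lifting rules $\textrm{[seq]}$, $\textrm{[push]}$, $\textrm{[par-l]}$, $\textrm{[par-r]}$ merely pass through the change produced at their sub-derivation; $\textrm{[pop]}$, $\textrm{[decl]}$, $\textrm{[var-ass]}$, $\textrm{[field-ass]}$, $\textrm{[seq-skip]}$, $\textrm{[invoc]}$, $\textrm{[sync]}$, $\textrm{[reentrant-lock]}$ and $\textrm{[decrease-lock]}$ leave the firing thread's lockset, and every other lockset, unchanged; $\textrm{[release-lock]}$ can only delete an element from $\lockset_i$; $\textrm{[spawn]}$ adds a new thread whose lockset is $\emptyset$; and $\textrm{[end-l]}$, $\textrm{[end-r]}$ remove a thread whose stack is empty and whose lockset is therefore $\emptyset$ by the terminated-thread property. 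Hence none of these can turn $\bigcup_j\hat\lockset_j$ into a proper superset of $\bigcup_j\lockset_j$, so the step must be an instance of $\textrm{[acquire-lock]}$ fired by thread $i$. Reading that rule (and its lifting) directly off the tables then yields $S_i=\record{\kappa.m}{\lock{l};C'}{\sigma}\stsep S$, $\lockson{\mu(l)}=0$, $\hat\mu=\mu[l\mapsto\addlock{\mu(l)}]$ hence $\lockson{\hat\mu(l)}=1$, $\hat\lockset_i=\lockset_i\cup\{l\}$, $m=n$, and $\hat\lockset_j=\lockset_j$ for $j\neq i$; finally $l\notin\lockset_i$ (so that $\hat\lockset_i=\lockset_i\uplus\{l\}$) because otherwise $\hat\lockset_i=\lockset_i$ and, by the reasoning just given, the union would not have grown, contradicting the hypothesis.

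I expect the main obstacle to be simply the bookkeeping of this exhaustive rule inspection, where it is easy to overlook $\textrm{[spawn]}$ and the two thread-removal rules (the only place where the ``empty stack implies no locks'' lemma is needed) or to mishandle $\textrm{[reentrant-lock]}$, in which the lock \emph{counter} of $l$ grows while the lockset of the thread does not. The second delicate point, shared by both directions, is the strictness of the inclusion: it is not automatic from $\textrm{[acquire-lock]}$ alone and must be justified via Proposition~\ref{prop:lock-threads}, which guarantees that the location just acquired was held by no thread beforehand and hence genuinely enlarges $\bigcup_j\lockset_j$ instead of duplicating an existing lock.
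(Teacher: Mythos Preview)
The paper actually omits a proof for this particular proposition (the surrounding ones---Reentrant locking, Lock releasing, Unlocking---are each dispatched by a one-line ``by case analysis on the rule applied''), so there is no detailed argument to compare against. Your approach is exactly the one the paper adopts for those companion results, and your case analysis is correct and more carefully worked out than anything the paper provides: you handle the thread-creation and thread-termination rules explicitly (invoking the ``empty stack implies empty lockset'' proposition for the latter), you separate \textrm{[reentrant-lock]} from \textrm{[acquire-lock]} properly, and you justify the strictness of the inclusion via Proposition~\ref{prop:lock-threads}. This is precisely the expected proof.
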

%
%

Reentrant locks are allowed: only threads that already own the 
lock on an object can synchronize again on that object. 
\begin{proposition}[Reentrant locking]
Given an arbitrary run
\[
\anglepair{T_0}{\mu_0}\trans{}^*\anglepair{\thread{S_1}{\lockset_1}\parallel
\ldots\parallel \thread{S_n}{\lockset_n}\,}{\mu}
\]
where $l \in  \bigcup_{j=1}^n{{\lockset}_j}$, for some $l$, and 
 $S_i = \record{\kappa.m}{\lock{l};C'}{s} \stsep S$,
 for some $i \in \{ 1..n \}$, $\kappa,m,C,E,C',\sigma$ and $S$. 
Then 
\[
\anglepair{\thread{S_1}{\lockset_1}\parallel
\ldots\parallel \thread{S_n}{\lockset_n}\,}{\mu}
\trans{i} \anglepair{\thread{\hat{S}_1}{\hat{\lockset}_1}\parallel
\ldots\parallel \thread{\hat{S}_n}{\hat{\lockset}_n}\,}{\hat{\mu}}
\]
if and only if 
\begin{itemize}
\itemsep.13em
\item 
$l \in \lockset_i $
\item  $\lockson{\hat{\mu}(l)}=\lockson{\mu(l)}+1$
\item 
 $m=n$ and  $\hat{\lockset}_j = \lockset_j$ for every $j\in\{1\ldots n\}$.
\end{itemize}
\end{proposition}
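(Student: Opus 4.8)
The plan is to prove the biconditional by inversion of the operational rules, after isolating one auxiliary invariant about lock counters. First I would establish, by a routine induction on the length of the trace $\anglepair{T_0}{\mu_0}\trans{}^{*}\anglepair{T}{\mu}$, that in every reachable configuration, whenever a location $l$ lies in the lockset of some thread we have $\lockson{\mu(l)}\ge 1$: in Table~\ref{tab:SOS-concurrent}, [acquire-lock] is the only rule that adds to a thread's lockset a location not already in it, and it simultaneously raises $\lockson{\mu(l)}$ from $0$ to $1$; [reentrant-lock] and [decrease-lock] act only on a location whose counter is already positive and keep it positive; [release-lock] removes $l$ from the lockset exactly when the counter falls back to $0$; and the sequential and structural rules touch neither locksets nor lock counters. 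Since the hypothesis gives $l\in\bigcup_{j=1}^{n}\lockset_j$, this invariant forces $\lockson{\mu(l)}\ge 1$, so the premise $\lockson{\mu(l)}=0$ of [acquire-lock] fails, and the atomic command $\lock{l}$ at the head of thread $i$'s continuation can reduce only through [reentrant-lock].

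For the ($\Leftarrow$) direction I would assemble the step explicitly from the hypothesis $l\in\lockset_i$: [reentrant-lock] fires on the top record $\record{\kappa.m}{\lock{l}}{\sigma}$ of thread $i$ (its premise $l\in\lockset_i$ is met), [seq] lifts it over the continuation $\lock{l};C'$ (legitimate since $\lock{l}$ is neither a method call nor a fork), [push] carries the tail $S$ of the activation stack, and the appropriate composition of [par-l] and [par-r] embeds the derivation as the $i$-th component of the pool, re-indexing it to $\reduces{i}$. Inspecting [reentrant-lock], its only effect is $\hat{\mu}=\extends{\mu}{l}{\addlock{\mu(l)}}$, so $\lockson{\hat{\mu}(l)}=\lockson{\mu(l)}+1$; no activation record is pushed or popped and no thread is spawned or terminated, so the pool still has $n$ components and $m=n$; and since [reentrant-lock] leaves the firing thread's lockset unchanged while the other threads are untouched, $\hat{\lockset}_j=\lockset_j$ for every $j$. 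These are exactly the displayed step and the three listed conditions.

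For the ($\Rightarrow$) direction I would argue by inversion. Given a step $\reduces{i}$ out of the displayed configuration, the continuation of thread $i$'s top record is $\lock{l};C'$, so reading [par-l]/[par-r], [push] and [seq] backwards shows the step is driven by a reduction of the atomic command $\lock{l}$ inside that record. The only rules applicable to $\lock{l}$ are [acquire-lock] and [reentrant-lock]; by the first paragraph [acquire-lock] is impossible, hence [reentrant-lock] fired and its premise is precisely $l\in\lockset_i$. Computing its effect as above yields $\lockson{\hat{\mu}(l)}=\lockson{\mu(l)}+1$, all locksets preserved, and $m=n$, i.e.\ the three conditions. One may additionally observe that Proposition~\ref{prop:lock-threads} then forces $l\notin\lockset_j$ for every $j\ne i$, so thread $i$ is the unique owner of $l$; this is not needed for the argument but makes clear why no non-reentrant acquisition of $l$ is available to thread $i$.

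The only ingredient that is more than bookkeeping is the lock-counter invariant of the first paragraph, and even that is a short trace induction over the six rules of Table~\ref{tab:SOS-concurrent}; everything else is straight inversion of the operational rules. The single care point is to carry the reduction of the one atomic command $\lock{l}$ through the [seq], [push] and [par-l]/[par-r] layers without losing the thread index $i$, while noticing that [reentrant-lock] modifies no lockset whatsoever --- which is precisely what makes all three required conditions drop out together.
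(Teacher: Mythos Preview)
Your proof is correct and follows the same approach as the paper: case analysis on the operational rules, concluding that only \textrm{[reentrant-lock]} can apply. The paper's proof is a one-liner (``By case analysis on the rule applied to perform the reduction. Here the only rule which can be applied is [reentrant-lock].''), whereas you have spelled out the lock-counter invariant needed to exclude \textrm{[acquire-lock]} and walked carefully through the \textrm{[seq]}/\textrm{[push]}/\textrm{[par]} lifting in both directions---details the paper leaves implicit but which your argument makes explicit and sound.
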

\begin{proof}
By case analysis on the rule applied to perform the reduction.
Here the only rule which can be applied is [reeentrant-lock].
\end{proof}
%
%

Locks on locations are never released by mistake:
\begin{proposition}[Lock releasing]
Let 
\[
\anglepair{T_0}{\mu_0}\trans{}^*\anglepair{\thread{S_1}{\lockset_1}\parallel
\ldots\parallel \thread{S_n}{\lockset_n}\,}{\mu}
\trans{i} \anglepair{\thread{\hat{S}_1}{\hat{\lockset}_1}\parallel
\ldots\parallel \thread{\hat{S}_m}{\hat{\lockset}_m}\,}{\hat{\mu}}
\]
be an arbitrary run. 
Then  $\bigcup_{j=1}^n{{\lockset}_j} \, \supset \, \bigcup_{j=1}^m{\hat{\lockset}_j} $,  if and only if
\begin{itemize}
\itemsep.13em
\item 
 $S_i = \record{\kappa.m}{\unlock{l};C}{\sigma} \stsep S$,
 for some $\kappa,m,l,C,\sigma$ and $S$
\item
  $\lockset_i = \hat{\lockset}_i \uplus \{l \}$ 
\item
 $\lockson{{\mu}(l)} = 1 $ and 
  $\lockson{\hat{\mu}(l)} = 0 $
\item
 $m=n$ and $\hat{\lockset}_j = \lockset_j$ for every $j\in\{1\ldots n\}\setminus\{i\}$.
\end{itemize}
\end{proposition}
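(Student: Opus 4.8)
The plan is to read this proposition as a \emph{characterization}: among all one-step reductions $\reduces{i}$ that can leave a reachable configuration, the ones that make the pool's total set of held locks $\bigcup_{j}\lockset_j$ strictly smaller are exactly the applications of rule \textrm{[release-lock]}, and the statement just records their effect. I would therefore split the biconditional, proving the ``if'' direction by a short set computation and the ``only if'' direction by a case analysis on the operational rules.

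For the ``if'' direction I would assume the four bulleted conditions. The second one, $\lockset_i=\hat{\lockset}_i\uplus\{l\}$, gives $l\in\lockset_i$ and $\hat{\lockset}_i=\lockset_i\setminus\{l\}$; the fourth gives $m=n$ and $\hat{\lockset}_j=\lockset_j$ for every $j\neq i$. Since $l\in\lockset_i$, Proposition~\ref{prop:lock-threads} yields $l\notin\lockset_j$ for all $j\neq i$, hence $\bigcup_{j=1}^{m}\hat{\lockset}_j=\bigl(\bigcup_{j=1}^{n}\lockset_j\bigr)\setminus\{l\}$, which is a proper subset of $\bigcup_{j=1}^{n}\lockset_j$ because $l$ belongs to the latter. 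The first and third conditions are not used here; they are only forced in the converse.

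For the ``only if'' direction I would assume that $\bigcup_{j=1}^{n}\lockset_j\supset\bigcup_{j=1}^{m}\hat{\lockset}_j$ is a proper inclusion and do a case analysis on the operational rule that realizes the step $\reduces{i}$, stripping off the structural rules \textrm{[seq]}, \textrm{[push]}, \textrm{[par-l]} and \textrm{[par-r]}, which only relay a sub-derivation. Of the remaining rules, \textrm{[decl]}, \textrm{[var-ass]}, \textrm{[field-ass]}, \textrm{[seq-skip]}, \textrm{[invoc]}, \textrm{[sync]}, \textrm{[pop]}, \textrm{[reentrant-lock]} and \textrm{[decrease-lock]} leave every lock-set unchanged; \textrm{[acquire-lock]} only enlarges one and \textrm{[spawn]} adds a thread with an empty lock-set, so neither shrinks the union; and \textrm{[end-l]}, \textrm{[end-r]} drop a thread whose stack is $\emptystack$, which holds no locks by the termination property already proved. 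Hence the step $\reduces{i}$ must be a \textrm{[release-lock]} step fired in thread $i$, wrapped (innermost first) in \textrm{[seq]}, \textrm{[push]} and \textrm{[par-l]}/\textrm{[par-r]}; and since the union is assumed to strictly shrink, the released location $l$ was in fact in $\lockset_i$. Reading off the premise and conclusion of \textrm{[release-lock]} then gives the four conditions: $S_i=\record{\kappa.m}{\unlock{l};C}{\sigma}\stsep S$ (with $C$ possibly empty); $\lockson{\mu(l)}=1$, and $\hat{\mu}=\mu[l\mapsto\remlock{\mu(l)}]$ so that $\lockson{\hat{\mu}(l)}=0$; $\lockset_i=\hat{\lockset}_i\uplus\{l\}$; and $m=n$ with $\hat{\lockset}_j=\lockset_j$ for $j\neq i$, since the structural wrappers affect neither the other threads nor their number.

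The only genuinely delicate point is the exhaustiveness of the case analysis in the ``only if'' direction, and within it the \textrm{[end-l]}/\textrm{[end-r]} cases: the argument would break if a terminating thread could still own a lock, which is exactly why it leans on the previously proved fact that a thread with empty activation stack has an empty lock-set. The empty-continuation shape of $S_i$ deserves one extra line. Everything else is routine, and the whole proof mirrors the one for the companion Proposition on locking stated just above.
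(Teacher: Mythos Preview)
Your proposal is correct and follows the same approach as the paper, namely a case analysis on the operational rule firing in the step $\reduces{i}$, identifying \textrm{[release-lock]} as the only rule that can strictly shrink the global lock set. You spell out both directions of the biconditional and make explicit the appeals to Proposition~\ref{prop:lock-threads} and to the thread-termination property for the \textrm{[end-l]}/\textrm{[end-r]} cases, whereas the paper's own proof is the one-line ``By case analysis on the rule applied to perform the reduction. Here the only possible rule is [release-lock].''
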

\begin{proof}
By case analysis on the rule applied to perform the reduction. 
Here the only possible rule is [release-lock].
\end{proof}
%

Unlocking always happens after some locking: it may release the lock 
or not, depending on the number of previous lockings.
\begin{proposition}[Unlocking]
Let 
\[
\anglepair{T_0}{\mu_0}\trans{}^*\anglepair{\thread{S_1}{\lockset_1}\parallel
\ldots\parallel \thread{S_n}{\lockset_n}\,}{\mu}
\trans{i} \anglepair{\thread{\hat{S}_1}{\hat{\lockset}_1}\parallel
\ldots\parallel \thread{\hat{S}_m}{\hat{\lockset}_m}\,}{\hat{\mu}}
\]
be an arbitrary run where 
 $S_i = \record{\kappa.m}{\unlock{l};C}{\sigma} \stsep S$,
 for some $\kappa,m,C,\sigma$ and $S$, then 
\begin{itemize}
\itemsep.13em
\item 
$l \in {\lockset}_i$
\item if $\lockson{{\mu}(l)} > 1$ then $\hat{\lockset}_i = \lockset_i$ 
else  $\lockset_i = \hat{\lockset}_i \uplus \{l \}$ 
\item  
$\lockson{\hat{\mu}(l)} = \lockson{\mu(l)} -1 $
\item
 $m=n$ and $\hat{\lockset}_j = \lockset_j$ for every $j\in\{1\ldots n\}\setminus\{i\}$.
\end{itemize}
\end{proposition}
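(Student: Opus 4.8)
The plan is to proceed by inversion on the SOS derivation of the step $\trans{i}$, exactly as in the proofs of the neighbouring \emph{Locking}, \emph{Reentrant locking} and \emph{Lock releasing} propositions. First I would peel off the structural layers. Since the firing thread is the $i$-th one, the step is built by applications of $\textrm{[par-l]}$ and $\textrm{[par-r]}$ that isolate $\thread{S_i}{\lockset_i}$ and leave every other component of the pool untouched; this immediately yields $m=n$ and $\hat{\lockset}_j=\lockset_j$ for all $j\neq i$ (the last bullet). Next, $\textrm{[push]}$ reduces the top activation record of $S_i$, whose continuation is $\unlock{l};C$. Because $\unlock{l}$ is neither of the form $E.p()$ nor $\fork{E.p()}$, and is not $\void$, rule $\textrm{[seq-skip]}$ does not apply and $\textrm{[seq]}$ is forced; it reduces the record $\record{\kappa.m}{\unlock{l}}{\sigma}$ in isolation. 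At that base level the only rules whose redex is an $\unlock{\cdot}$ are $\textrm{[decrease-lock]}$ and $\textrm{[release-lock]}$, so the argument splits into these two cases.

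In the $\textrm{[decrease-lock]}$ case the premise gives $\lockson{\mu(l)}>1$ and the rule sets $\hat{\mu}=\mu[l\mapsto\remlock{\mu(l)}]$ while leaving the lockset untouched, so $\lockson{\hat{\mu}(l)}=\lockson{\mu(l)}-1$ and $\hat{\lockset}_i=\lockset_i$, which is the ``then'' branch of the second bullet. In the $\textrm{[release-lock]}$ case the premise gives $\lockson{\mu(l)}=1$ --- hence we are in the ``else'' branch --- and again $\hat{\mu}=\mu[l\mapsto\remlock{\mu(l)}]$, so $\lockson{\hat{\mu}(l)}=0=\lockson{\mu(l)}-1$, while $\hat{\lockset}_i=\lockset_i\setminus\{l\}$. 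In both cases $\lockson{\hat{\mu}(l)}=\lockson{\mu(l)}-1$ (the third bullet), and in particular $\lockson{\mu(l)}\geq 1$.

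What still has to be argued is the first bullet, $l\in\lockset_i$, which is also what is needed to upgrade $\hat{\lockset}_i=\lockset_i\setminus\{l\}$ to the disjoint union $\lockset_i=\hat{\lockset}_i\uplus\{l\}$ claimed in the $\textrm{[release-lock]}$ case (once $l\in\lockset_i$, the subtraction restores $\lockset_i$ and gives $l\notin\hat{\lockset}_i$). For this I would first establish, by a routine induction on the length of the trace $\anglepair{T_0}{\mu_0}\trans{}^{*}\anglepair{T}{\mu}$, the structural invariant: in every reachable configuration, whenever a pending $\unlock{l'}$ occurs in the stack $S_k$ of some thread, then $l'\in\lockset_k$. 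Such tokens are created only by $\textrm{[sync]}$, which rewrites $\synch{E}{C}$ into $\lock{l'};C;\unlock{l'}$; the matching $\lock{l'}$ must be executed before its $\unlock{l'}$ can surface at the head of a continuation, and at that moment either $\textrm{[acquire-lock]}$ inserted $l'$ into $\lockset_k$ or $\textrm{[reentrant-lock]}$ fired with $l'$ already present; conversely, no rule removes $l'$ from a thread's lockset while a pending $\unlock{l'}$ for it remains in that thread (the decrementing rules only strip $\void;C$ sequences apart). Applying the invariant to $S_i$, whose head is $\unlock{l};C$, yields $l\in\lockset_i$, and then the $\textrm{[release-lock]}$ arithmetic above gives the disjointness.

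The main obstacle is precisely that invariant: phrasing ``pending $\unlock$ token'' rigorously for a stack of activation records, choosing a count of such tokens (per location, across the whole pool) that is exactly preserved to equal $\lockson{\mu(l)}$, and pushing the induction through every SOS rule --- in particular $\textrm{[invoc]}$ and $\textrm{[spawn]}$, which push fresh records, and $\textrm{[seq]}$/$\textrm{[push]}$, which rearrange continuations. Everything else is a direct reading-off of rule premises and conclusions, in the same style as the surrounding propositions; I would accordingly present the proof as ``by case analysis on the rule applied to perform $\trans{i}$, which must be $\textrm{[decrease-lock]}$ or $\textrm{[release-lock]}$,'' with the invariant stated and proved as a preliminary lemma.
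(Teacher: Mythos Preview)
Your approach is correct and matches the paper's, whose entire proof reads ``By case analysis on the rules applied to perform the reduction. Here there are two possible rules: \textrm{[decrease-lock]} and \textrm{[release-lock]}.'' You are in fact more careful than the paper: you correctly isolate that the first bullet $l\in\lockset_i$ does not fall out of the premises of either rule and needs a separate reachability invariant, whereas the paper's one-line proof simply asserts all four bullets follow from the case split without addressing this point.
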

\begin{proof}
By case analysis on the rules applied to perform the reduction. 
Here there are two possible rules: [decrease-lock] and [release-lock].
\end{proof}

\end{document}